\newtheorem{proposition}{Proposition}
\theoremstyle{nonumberplain}
\newtheorem{proof}{Proof}
\journal{Signal Processing}
\begin{document}

\begin{frontmatter}



\title{Fast and Efficient Implementation of the Maximum Likelihood Estimation for the Linear Regression with Gaussian Model Uncertainty}


\author{Ruohai Guo, Jiang~Zhu, Xing Jiang and Fengzhong Qu} 

\affiliation{organization={the State Key Laboratory of Ocean Sensing, Zhejiang University},
            addressline={No.1 Zheda Road}, 
            city={Zhoushan},
            postcode={316021}, 
            country={China}}

\begin{abstract}

The linear regression model with a random variable (RV) measurement matrix, where the mean of the  random measurement matrix has full column rank, has been extensively studied. In particular, the quasiconvexity of the maximum likelihood estimation (MLE) problem was established, and the corresponding Cram\'{e}r–Rao bound (CRB) was derived, leading to the development of an efficient bisection-based algorithm known as RV-ML.
In contrast, this work extends the analysis to both overdetermined and underdetermined cases, allowing the mean of the random measurement matrix to be rank-deficient. A remarkable contribution is the proof that the equivalent MLE problem is convex and satisfies strong duality, strengthening previous quasiconvexity results. Moreover, it is shown that in underdetermined scenarios, the randomness in the measurement matrix can be beneficial for estimation under certain conditions.
In addition, a fast and unified implementation of the MLE solution, referred to as generalized RV-ML (GRV-ML), is proposed, which handles a more general case including both underdetermined and overdetermined systems. Extensive numerical simulations are provided to validate the theoretical findings.

\end{abstract}



\begin{keyword}


linear regression \sep Gaussian model uncertainty \sep maximum likelihood estimation \sep convexity \sep strong duality
\end{keyword}

\end{frontmatter}


\section{Introduction}
Linear regression model corresponding to the least squares (LS) solution has been widely applied in signal processing fields \cite{keyEst, raolin}. In practice, the measurement matrix in the linear regression model may have some uncertainty, and the LS solution ignoring the uncertainty leads to poor estimation performance. 
Several methods have been proposed to model the uncertainty. The first is the errors-in-variables (EIV) model, which models the measurement matrix as an unknown deterministic matrix, and the noisy observation of the measurement matrix is available  
\cite{errormodel}. The joint maximum likelihood estimation of the  deterministic unknown matrix and the signal leads to the total least squares (TLS) solution \cite{TLS1, TLS}.
The second is to model the measurement matrix belong to an elipsoidal set, and robust estimation is conducted by using the S procedure \cite{robusteldar, robustBeck}. The last is random variable (RV) model which treats the measurement matrix as the random matrix whose mean and variance are known \cite{SPLeldar, EldarCRB}. By marginalizing over the random matrix to obtain the loglikelihood, the maximum likelihood estimator (MLE) is developed \cite{SPLeldar}. 

In communication systems over multiple-input–multiple-output (MIMO) channels, the measurement matrix often represents a time-varying and inherently random channel \cite{EldarCRB}. In such cases, the RV model is especially suitable, as it characterizes the measurement matrix as a random perturbation around a nominal matrix, reflecting the statistical uncertainty in the channel state. This uncertainty is attributable to the imperfect channel state information.
Later, the estimation and detection for the RV model under low resolution quantization have also been extensively studied in depth  \cite{zhump, ligang, Sani, ningzhang, multifading}.

In the RV model in which the mean of the random measurement matrix has full column rank, \cite{SPLeldar} demonstrated that the MLE can be formulated as a quasiconvex optimization problem, which can be efficiently solved using bisection methods known as RV-ML. The computational complexity of this approach is primarily determined by the number of iterations in the bisection method, along with the complexity of solving the trust region problem. \cite{EldarCRB} proposed a sophisticated expectation maximization (EM) algorithm, but its computational complexity is also high, and convergence to the global optimum is not guaranteed. In this work, general cases including overdetermined and underdetermined where the rank of the mean of the random measurement matrix either has full column rank or rank deficient are studied.  It is shown that the MLE can be transformed into a convex optimization problem using the lifting technique by introducing an additional variable, where strong duality holds. It is shown that the global solution can be found, and an algorithm, referred to as generalized RV-ML (GRV-ML), with lower computation complexity compared to the previous methods is developed to address the general case. Finally, numerical simulations are conducted to illustrate the correctness of the established theoretical results.
The GRV-ML software implementing the proposed algorithm can be found in \cite{code}.

\section{Problem Formulation}
This section describes the problem setup and reviews the results presented in \cite{SPLeldar}. The linear regression with Gaussian model uncertainty is \cite{SPLeldar, EldarCRB}
\begin{align}\label{measmodel}
    \mathbf{y} =({\mathbf H}+{\mathbf E}){\mathbf x}+ \boldsymbol{\epsilon},
\end{align}
where $\mathbf{y}\in{\mathbb R}^{M}$ denotes the noisy measurements, ${\mathbf G}={\mathbf H}+{\mathbf E}$ is the random measurements matrix with mean ${\mathbf H}\in{\mathbb R}^{M\times N}$ and random component $\mathbf E$, where the $(m,n)$th element  $E_{mn}$ is independent and identically distributed (i.i.d.) and satisfying $E_{mn} \sim \mathcal{N}(0, \sigma^2_e)$, $\boldsymbol{\epsilon}\in{\mathbb R}^{M}$ denotes the additive white Gaussian noise (AWGN) independent of $\mathbf E$ and satisfying $\boldsymbol{\epsilon}\sim {\mathcal N}({\mathbf 0},\sigma_{{\epsilon}}^2{\mathbf I}_M)$. The nuisance parameters $\sigma_{{\epsilon}}^2$  and $\sigma_e^2$ are supposed to be known. 
Here, an example is presented to show that $\sigma_e^2$ and $\sigma_{{\epsilon}}^2$ can be acquired in some settings. In a quantized system where both the measurement matrix $\mathbf G$ and the measurements $\mathbf y$ are quantized where the stepsize of the quantizer is $\Delta$ \cite{, CSboyd}, then the quantizer noise follows ${\mathcal U}\left(-\Delta/2,\Delta/2\right)$ where ${\mathcal U}\left(a,b\right)$ denotes the uniform distribution in the interval $(a,b)$. In this setting, the variance of the quantization noise is $\frac{\Delta^2}{12}$ \cite{IRE}. If we take the quantization into consideration and approximate the quantization noise as Gaussian distribution, one has $\sigma_e^2 =\frac{\Delta^2}{12}$ and $\sigma_{{\epsilon}}^2 = \frac{\Delta^2}{12}+ \sigma_{\tilde{{\epsilon}}}^2$ with $\sigma_{\tilde{{\epsilon}}}^2$ being the variance of the measurement noise in the high resolution quantization setting.

Define the equivalent noise $\boldsymbol{\epsilon}_{\rm{eq}}$ as \cite{SPLeldar}
\begin{equation}\label{eqnoise}
\boldsymbol{\epsilon}_{\rm{eq}}={\mathbf E}{\mathbf x}+\boldsymbol{\epsilon},
\end{equation}
model (\ref{measmodel}) reduces to 
\begin{align}\label{measmodelred}
    \mathbf{y} ={\mathbf H}{\mathbf x}+\boldsymbol{\epsilon}_{\rm{eq}},
\end{align}
where $\boldsymbol{\epsilon}_{\rm{eq}}\sim {\mathcal N}({\mathbf 0},(\sigma_e^2\|\mathbf x\|_2^2+\sigma_{{\epsilon}}^2){\mathbf I}_M)$ \cite{zhuMLE}. Consequently, the MLE is formulated as 
\begin{align}\label{mle}
\widehat{\mathbf x}_{\rm ML}=\underset{\mathbf x}{\operatorname{argmin}}~ f({\mathbf y};\mathbf x),
\end{align}
where $f({\mathbf y};\mathbf x)$ denotes the negative loglikelihood function
\begin{align}\label{fxydef}
    f({\mathbf y};\mathbf x)=\frac{\|{\mathbf y}-{\mathbf H}{\mathbf x}\|_2^2}{2\left(\sigma_e^2\|\mathbf x\|_2^2+\sigma_{{\epsilon}}^2\right)}+\frac{M}{2}\log\left(\sigma_e^2\|\mathbf x\|_2^2+\sigma_{{\epsilon}}^2\right)
\end{align}
dropping some irrelevant terms. \cite{SPLeldar, EldarCRB} investigated the overdetermined case $M\geq N$ with $\mathbf H$ being full column rank, i.e., ${\rm rank}({\mathbf H})=N$, and showed that (\ref{mle}) is a quasiconvex optimization problem and bisection methods can be adopted to obtain $\widehat{\mathbf x}_{\rm ML}$ efficiently. In the following, we strengthen these results by showing the convexity of the equivalent MLE for both the overdetermined and underdetermined cases. 

\section{A Convexity Result and A more Efficient Algorithm}
In the following, we show the original MLE (\ref{mle}) can be equivalently transformed to a convex optimization problem, and strong duality holds for arbitrary $M$ and $N$. Consequently, the Karush-Kuhn-Tucker (KKT) conditions are sufficient and necessary conditions for optimality \cite{cvxbook}. The optimal solutions are obtained via solving the KKT conditions, which can be solved efficiently.

The singular value decomposition (SVD) is applied to $ \mathbf{H} $, i.e.,
\begin{align}\label{HSVD}
\mathbf{H} = \mathbf{U} \boldsymbol{\Sigma} \mathbf{V}^{\rm T},
\end{align} 
where $\mathbf{U} \in \mathbb{R}^{M \times M}$ and $\mathbf{V} \in \mathbb{R}^{N \times N}$ are orthogonal matrices, and $ \boldsymbol{\Sigma} \in \mathbb{R}^{M \times N} $ is a diagonal matrix containing the singular values $ \lambda_1, \lambda_2, \dots, \lambda_R $ on its diagonal, with $ R={\rm rank}({\mathbf H}) \leq \operatorname{min}(M,N) $. The singular values are arranged in descending order: $ \lambda_1 \geq \lambda_2 \geq \dots \geq \lambda_R > 0 $. Note that the orthogonal transform preserves the $l_2$ norm of the original vector, i.e., $\|{\mathbf U}{\mathbf y}\|_2=\|{\mathbf y}\|_2$ and $\|{\mathbf V}{\mathbf x}\|_2=\|{\mathbf x}\|_2$. Define two vectors $\widetilde{\mathbf y}$ and $\widetilde{\mathbf x}$ as
\begin{subequations}\label{defxy}
\begin{align}
\widetilde{\mathbf{y}}={{\mathbf{U}}^{\text{T}}}\mathbf{y},\label{defineytilde}\\
\widetilde{\mathbf{x}}={\mathbf V}^{\rm T}{\mathbf x}.\label{definextilde}
\end{align} 
\end{subequations}
Note that there exists a one to one mapping between $\widetilde{\mathbf{x}}$ and ${\mathbf x}$. 
Let $a:b=\{a,a+1,\cdots,b\}$ and $\mathbf y_{a:b} = [y_a, y_{a+1}, \dots, y_b]^{\rm T}\in{\mathbb R}^{b-a+1} $ denotes the part extracting from $\mathbf y$.
For $a>b$, $a:b$ is an empty set, $\mathbf y_{a:b}$ does not exist and $\|\mathbf y_{a:b}\|_2^2=0$.
Substituting (\ref{HSVD}) and (\ref{defxy}) in (\ref{fxydef}), the original MLE (\ref{mle}) can be equivalently transformed to another problem, i.e., by solving $\widetilde{\mathbf{x}}$ defined in (\ref{definextilde}) via
\begin{align}\label{MLtildefunc}
\widehat{\widetilde{\mathbf{x}}}_{\mathrm{ML}}=\underset{\widetilde{\mathbf x}}{\operatorname{argmin}}\frac{\|{{\widetilde{\mathbf{y}}}_{1:R}}- {\boldsymbol\lambda }\odot {{{\mathbf{\widetilde{x}}_{1: R}}}}\|_{2}^{2}+\|{{\widetilde{\mathbf{y}}}_{R+1:M}}\|_{2}^{2}}{2(\sigma _{e}^{2}\|{{\widetilde{\mathbf{x}}}}\|_{2}^{2}+\sigma_{{\epsilon}}^{2})}+\frac{M}{2}\log \left( \sigma _{e}^{2}\|{{\widetilde{\mathbf{x}}}}\|_{2}^{2}+\sigma_{{\epsilon}}^{2} \right),
\end{align}
where 
$\boldsymbol\lambda=[\lambda_1 , \lambda_2, \dots ,\lambda_R]^{\rm T}$
and $\odot$ represents Hadamard product. It can be seen that (\ref{MLtildefunc}) also applies when $R=M$.

To further delve into the structure of (\ref{MLtildefunc}), we perform reparameterization by introducing two slack variables, i.e.,
\begin{subequations}\label{defut}
\begin{align}
\mathbf u=\frac{\widetilde{\mathbf x}}{\sqrt{\sigma _{e}^{2}\|{{\widetilde{\mathbf{x}}}}\|_{2}^{2}+\sigma_{{\epsilon}}^{2}}},\\
t=\frac{1}{\sqrt{\sigma _{e}^{2}\|{{\widetilde{\mathbf{x}}}}\|_{2}^{2}+\sigma_{{\epsilon}}^{2}}}.\label{tdef}
\end{align}    
\end{subequations}
Obviously, we obtain another equivalent but constrained optimization problem 
\begin{subequations}\label{solvehtu}
\begin{align}
     &\underset{{\mathbf u},t>0}{\operatorname{minimize}}
&&h({\mathbf u},t) \label{optprobobj}\\
     &{\rm subject ~to}&&\sum_{j=1}^{R} \sigma _{e}^{2} u_{j}^{2}+ \sum_{i=R+1}^{N} \sigma _{e}^{2} u_{i}^{2} +\sigma_{{\epsilon}}^2 t^{2}=1,\label{optprobcons}
\end{align} 
\end{subequations}
where $h({\mathbf u},t)$ is 
\begin{align}\label{hutdef}
h({\mathbf u},t)&\triangleq \frac{1}{2}\sum_{j=1}^{R}\left( \widetilde{y}_{j}t-\lambda_{j} u_{j}\right)^{2}+\frac{1}{2}\|\widetilde{\mathbf y}_{R+1:M}\|_2^2t^2-M \log t \notag\\
&= \sum_{j=1}^{R} \frac{1}{2}\left(  \widetilde{{y}}_{j}^2t^2 + \lambda_j^2 u_j^2 - 2t \widetilde{{y}}_{j} \lambda_j u_j \right) + \frac{1}{2}\|\widetilde{\mathbf y}_{R+1:M}\|_2^2t^2  - M \log t.
\end{align}
It can be seen that the number of variables is $N+1$.
Upon examining the constraint (\ref{optprobcons}), we observe that it is symmetric with respect to $\mathbf u$, i.e., for any pair $({\mathbf u}, t)$ satisfying (\ref{optprobcons}), the pair $(-{\mathbf u}, t)$ also satisfies (\ref{optprobcons}) when $t > 0$. Leveraging the symmetric property of the constraint (\ref{optprobcons}), the objective function $h({\mathbf u}, t)$ attains its minimum when the signs of $u_j$ and $\widetilde{y}_{j}$ align for $j = 1,2,\cdots, R$. Consequently, we obtain another equivalent problem with an additional nonnegative constraint ${\mathbf o}\geq {\mathbf 0}$, where $\geq$ represents componentwise inequality,  formulated as 
\begin{subequations}\label{solvehsymtu}
\begin{align}
     &\underset{{\mathbf o}\geq {\mathbf 0},t>0}{\operatorname{minimize}}
&&g({\mathbf o},t) \label{optprobsymobj}\\
     &{\rm subject ~to}&&\sum_{j=1}^{R} \sigma _{e}^{2} o_{j}^{2}+\sum_{i=R+1}^{N} \sigma _{e}^{2} o_{i}^{2} +\sigma_{{\epsilon}}^2 t^{2}=1,\label{optprobsymcons}
\end{align} 
\end{subequations}
where $g({\mathbf o},t)$ is defined as
\begin{align}\label{got}
g({\mathbf o},t)=\sum_{j=1}^{R} \frac{1}{2}\left(  \widetilde{{y}}_{j}^2t^2 + \lambda_j^2 o_j^2 - 2t|\widetilde{{y}}_{j}|\lambda_j o_j \right) + \frac{1}{2}\|\widetilde{\mathbf y}_{R+1:M}\|_2^2t^2  - M \log t.
\end{align}
Obviously, the optimal ${\mathbf o}^{\star}$ and ${\mathbf u}^{\star}$ are related via 
\begin{align}\label{ustar}
{\mathbf u}_{1:R}^{\star}={\mathbf o}_{1:R}^{\star}\odot{\rm sign}\left(\widetilde{\mathbf y}_{1:R}\right).
\end{align}
And ${\mathbf o}_{R+1:N}^{\star}$ only has to satisfy the constraint (\ref{optprobsymcons}) and ${\mathbf u}_{R+1:N}^{\star}={\mathbf o}_{R+1:N}^{\star}$.

For problem (\ref{solvehsymtu}), define \begin{align}\label{defwz}
{\mathbf w}={\mathbf o}^2,z=t^2,
\end{align}
where ${\mathbf o}^2$ denotes the componentwise squared operation. We perform the re-parametrization trick again to transform problem (\ref{solvehsymtu}) into another equivalent optimization problem, i.e.,
\begin{subequations}\label{optfunczw}
\begin{align}
    &\underset{{\mathbf w},z>0}{\operatorname{minimize}}~&&
    \sum_{j=1}^{R} \frac{1}{2} \left( \widetilde{{y}}_{j}^2z + \lambda_j^2 w_{j} - 2\left|\widetilde{y}_{j}\right|\lambda_j \sqrt{w_{j} z} \right) + \frac{1}{2} \|\widetilde{{\mathbf y}}_{R+1:M}\|_2^2 z-\frac{1}{2} M \log z,\label{optobj} \\
    &{\rm subject ~to}&&\sum_{j=1}^{R} \sigma _{e}^{2} w_j + \sum_{i=R+1}^{N}\sigma _{e}^{2} w_i+\sigma_{{\epsilon}}^2 z=1,\label{equcons}\\
    & &&w_j \geq 0,\  j = 1,2,\cdots ,R,\label{inequconsj}\\
    & &&w_i \geq 0,\ i = R+1,R+2,\cdots,N.\label{inequconsi}
\end{align} 
\end{subequations}

In the following, we will prove that (\ref{optfunczw}) is a convex optimization problem. In addition, the constraints (\ref{equcons})-(\ref{inequconsi}) are all linear equalities and inequalities.
According to Slater's constraint qualification, strong duality holds \cite{cvxbook}, thus KKT conditions are sufficient and necessary conditions for optimality. We present a numerical efficient method to solve the KKT equation. Once (\ref{optfunczw}) is globally solved, the original MLE can be obtained accordingly. 

For problem (\ref{optfunczw}), the terms $  \widetilde{{y}}_{j}^2z$, $ \lambda_j^2 w_{j}$ and $ \|\widetilde{{\mathbf y}}_{R+1:M}\|_2^2 z$ are linear functions with respect to variable $z$ and $w_j$, the component $-\frac{1}{2} M \log z$  is a convex function. In the following, we present the convexity of $q({\mathbf w},z)=-\sqrt{z}\sum\limits_{j=1}^R C_j\sqrt{w_j}$ in the following proposition for $C_j>0$.
\begin{proposition}
For $C_j>0,j=1,2,\cdots,R$, $q({\mathbf w},z)=-\sqrt{z}\sum\limits_{j=1}^R C_j\sqrt{w_j}$ is a convex function with respect to $[{\mathbf w}^{\rm T},z]^{\rm T}$.
\end{proposition}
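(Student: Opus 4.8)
The plan is to reduce the claim to the elementary fact that the negative geometric mean of two nonnegative scalars is convex, and then to assemble $q$ from such pieces. First I would rewrite each summand as $-C_j\sqrt{z}\sqrt{w_j}=-C_j\sqrt{w_j z}$, so that
\begin{align}
q(\mathbf{w},z)=\sum_{j=1}^{R}C_j\,\varphi(w_j,z),\qquad \varphi(a,b)\triangleq-\sqrt{ab},
\end{align}
on the domain $a\ge 0$, $b>0$ (where in our case $a=w_j$ by (\ref{inequconsj})--(\ref{inequconsi}) and $b=z>0$). Since $C_j>0$ and a nonnegative combination of convex functions is convex, it suffices to prove that $\varphi$ is convex. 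Moreover each coordinate map $(\mathbf{w},z)\mapsto(w_j,z)$ is linear, and the composition of a convex function with an affine map is convex, so convexity of $\varphi$ in its two arguments upgrades directly to joint convexity of $(\mathbf{w},z)\mapsto\varphi(w_j,z)$, hence of $q$, with respect to the full vector $[\mathbf{w}^{\mathrm T},z]^{\mathrm T}$.

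For the convexity of $\varphi$ I would offer two short arguments and keep whichever reads best. The first is a direct Hessian computation: on the open region $a,b>0$ one obtains
\begin{align}
\nabla^2\varphi(a,b)=\frac{1}{4}\begin{bmatrix} a^{-3/2}b^{1/2} & -a^{-1/2}b^{-1/2}\\ -a^{-1/2}b^{-1/2} & a^{1/2}b^{-3/2}\end{bmatrix},
\end{align}
whose diagonal entries are strictly positive and whose determinant is $\tfrac{1}{16}\bigl(a^{-1}b^{-1}-a^{-1}b^{-1}\bigr)=0$; hence $\nabla^2\varphi\succeq\mathbf 0$, so $\varphi$ is convex on the open region and, by continuity, on its closure. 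The second argument avoids calculus altogether: $g(a)\triangleq-\sqrt{a}$ is convex on $a\ge 0$, and $\varphi(a,b)=b\,g(a/b)=-b\sqrt{a/b}=-\sqrt{ab}$ is precisely its perspective function, which is convex for $b>0$ \cite{cvxbook}. Either route yields the proposition; the feasible set of (\ref{optfunczw}) has $z>0$, so the boundary $b=0$ never needs to be invoked.

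The argument is essentially routine, so I do not expect a genuine obstacle; the only points deserving care are (i) the passage from two-variable convexity of $\varphi$ to joint convexity in $[\mathbf{w}^{\mathrm T},z]^{\mathrm T}$, which is handled by the affine-composition remark, and (ii) the behaviour at $w_j=0$, where $\varphi$ is merely continuous rather than twice differentiable. For (ii) I would phrase the Hessian step on the open orthant and then invoke continuity of $\varphi$, or simply rely on the perspective-function formulation, which is insensitive to that boundary.
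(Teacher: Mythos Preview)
Your proof is correct, and it takes a somewhat different route from the paper's. The paper computes the full $(R+1)\times(R+1)$ Hessian of $q$ directly, writes it in the block form
\[
\nabla^2 q(\mathbf w,z)=\begin{bmatrix}\dfrac{\partial^2 q}{\partial z^2} & \mathbf d^{\rm T}\\[4pt]\mathbf d & \operatorname{diag}(\mathbf p)\end{bmatrix},
\]
and then applies the Schur complement criterion, checking that $\operatorname{diag}(\mathbf p)\succeq\mathbf 0$ and that $\dfrac{\partial^2 q}{\partial z^2}-\mathbf d^{\rm T}\operatorname{diag}^{-1}(\mathbf p)\,\mathbf d=0$. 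You instead exploit the separable structure $q(\mathbf w,z)=\sum_j C_j\varphi(w_j,z)$, reduce everything to the two-variable building block $\varphi(a,b)=-\sqrt{ab}$, and verify its convexity either by a trivial $2\times2$ Hessian or via the perspective function. What your decomposition buys is modularity and a calculus-free option: once $\varphi$ is known to be convex (which is a textbook fact about the geometric mean), the result follows immediately from closure under affine composition and nonnegative sums, with no need to manage the full Hessian or invoke Schur complements. The paper's approach, by contrast, is a single self-contained computation that does not appeal to external convexity-preservation lemmas. Both are valid; yours is arguably more elementary, and your remarks on the boundary $w_j=0$ are slightly more careful than the paper's treatment.
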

\begin{proof}
According to the structure of $q({\mathbf w},z)$, its Hessian matrix $\nabla^2q({\mathbf w},z)$ has the following form
\begin{align}\label{henssian}
    \nabla^2q({\mathbf w},z)=\left[\begin{array}{cc}
        \frac{\partial^2q({\mathbf w},z)}{\partial z^2} & {\mathbf d}^{\rm T} \\
        {\mathbf d} & {\rm diag}\left(\mathbf p\right)
    \end{array}\right],
\end{align}
where $\frac{\partial^2q({\mathbf w},z)}{\partial z^2}=\frac{1}{4}\sum_{j=1}^{R}   {C_j w_j^{\frac{1}{2}}z^{-\frac{3}{2}}}$, $p_j=\frac{\partial^2 q({\mathbf w},z)}{\partial w_j^2} =\frac{1}{4}  {C_j w_j^{-\frac{3}{2}}z^{\frac{1}{2}}}$, $d_j=\frac{\partial^2 q({\mathbf w},z)}{\partial z \partial w_j} = -\frac{1}{4}C_{j}w_{j}^{-\frac{1}{2}}z^{-\frac{1}{2}}$.
According to Schur's lemma \cite{cvxbook}, $\nabla^2q({\mathbf w},z)\succeq {\mathbf 0}$ if and only if 
\begin{subequations}
\begin{align}
{\rm diag}\left(\mathbf p\right)\succeq {\mathbf 0},\frac{\partial^2q({\mathbf w},z)}{\partial z^2}\geq 0;\label{part1}\\
\frac{\partial^2q({\mathbf w},z)}{\partial z^2}-{\mathbf d}^{\rm T}{\rm diag}^{-1}\left(\mathbf p\right){\mathbf d}\geq 0.\label{part2}
\end{align}
\end{subequations}
(\ref{part1}) is satisfied obviously, and (\ref{part2}) is satisfied with equality, establishing the convexity of  $q({\mathbf w},z)$.
\end{proof}

According to convexity preservation property such that the nonnegative weighted sum of convex functions is also convex \cite{cvxbook}, the objective function of (\ref{optfunczw}) is convex. According to Slater's constraint qualification \cite{cvxbook}, the constraints, comprising a linear equality and linear inequalities, satisfy the regularity conditions under which strong duality holds and the KKT conditions provide both necessary and sufficient conditions for optimality. Let $\nu$ be a dual variable associated with the equality constraint (\ref{equcons}) and $\boldsymbol{\eta}\in \mathbb{R}^N$ be the dual variable associated with the inequality constraints (\ref{inequconsj}) and (\ref{inequconsi}). The Lagrangian function $ L({\mathbf w}, z, \nu,\boldsymbol{\eta}) $ is constructed as
\begin{align}\label{Larg}
    L({\mathbf w},z,\nu,\boldsymbol{\eta})=&\sum_{j=1}^{R} \frac{1}{2} \left( \widetilde{y}_{1j}^2 z + \lambda_j^2 w_{j} - 2\left|\widetilde{y}_{j}\right|\lambda_j \sqrt{w_{j} z} \right) +  \frac{1}{2} \|\widetilde{{\mathbf y}}_{R+1:M}\|_2^2z \notag\\&-\frac{1}{2} M \log z+
    \nu \left(\sum_{j=1}^{R} \sigma _{e}^{2} w_j + \sum_{i=R+1}^{N}\sigma _{e}^{2} w_i+\sigma_{{\epsilon}}^2 z-1\right) \notag\\&+ \sum_{j=1}^{R}(- \eta_j w_j)+ \sum_{i=R+1}^{N}(- \eta_i w_i).
\end{align}
The KKT conditions are 
\begin{subequations}{\label{KKTcon}}
    \begin{align}
    &\nabla_{w_j}L({\mathbf w},z,\nu,\boldsymbol{\eta})=\frac{1}{2}\lambda_j^2-\frac{1}{2}\left|\widetilde{y}_{j}\right|\lambda_j z^{\frac{1}{2}}w_j^{-\frac{1}{2}}+\nu\sigma_e^2 - \eta_j=0,\ j = 1,2,\cdots ,R,\label{KKTwj}\\
    &\nabla_{w_i}L({\mathbf w},z,\nu,\boldsymbol{\eta})=\nu\sigma_e^2 - \eta_i=0,\ i = R+1,R+2,\cdots,N,\label{KKTwi}\\
    &\nabla_{z}L({\mathbf w},z,\nu,\boldsymbol{\eta})=  \sum_{j=1}^{R} \frac{1}{2}\widetilde{y}_{j}^2 -  \sum_{j=1}^{R}\frac{1}{2}\left|\widetilde{y}_{j}\right|\lambda_j w_j^{\frac{1}{2}}z^{-\frac{1}{2}}  + \frac{1}{2} \|\widetilde{{\mathbf y}}_{R+1:M}\|_2^2-\frac{1}{2} M  z^{-{1}}\notag\\&\ \ \ \ \ \ \ \ \ \ \ \ \ \ \ \ \ \ \ \ \ \ \ +\nu\sigma_{{\epsilon}}^2
    ={0},\label{KKTz}\\
    &\sum_{j=1}^{R} \sigma _{e}^{2} w_j + \sum_{i=R+1}^{N}\sigma _{e}^{2} w_i+\sigma_{{\epsilon}}^2 z=1,\label{sum1}\\
    &w_j\geq 0,\ \eta_j\geq 0,\ j = 1,2,\cdots ,R,\label{dualvarj}\\
    &w_i\geq 0,\ \eta_i\geq 0,\ i = R+1,R+2,\cdots,N,\label{dualvari}\\
    &\eta_j w_j =0,\ j = 1,2,\cdots ,R,\label{KKTetaj}\\
    &\eta_i w_i =0,\ i = R+1,R+2,\cdots,N,\label{KKTetai}
    \end{align}
\end{subequations}
where (\ref{KKTwj})-(\ref{KKTz}) are the zero gradient conditions, (\ref{sum1}) is the primal constraint, (\ref{dualvarj}) and (\ref{dualvari}) are the dual constraints and (\ref{KKTetaj}) and (\ref{KKTetai}) are the complementary slackness conditions.

Let us solve these equations to find $\mathbf{w}^{\star}$, $z^{\star}$, $\nu^{\star}$ and $\boldsymbol{\eta}^{\star}$. According to (\ref{KKTwi}) and (\ref{KKTetai}), for $i = R+1,R+2,\cdots,N$, one has 
\begin{subequations}\label{etaiall}
\begin{align}
&\eta_i=\nu\sigma_e^2,\label{nueta}\\
&\eta_i w_i =0.\label{wieta}
\end{align}
\end{subequations}
(\ref{wieta}) implies that either $\eta_i=0$ or $w_i=0$. (\ref{nueta}) shows that $\eta_i=0$ implies $\nu=0$. Therefore, in the following, we discuss two cases: $\nu=0$ and $\nu\neq 0$.

If $\nu\neq0$, according to  (\ref{dualvari}) and (\ref{nueta}), we have $\eta_i> 0$. Thus, for $i = R+1,R+2,\cdots,N$, equation (\ref{wieta}) holds only if 
\begin{align}\label{wi0}
    w_i=0.
\end{align}
One can solve the equations (\ref{KKTwj})-(\ref{dualvarj}) and (\ref{KKTetaj}) to find $\boldsymbol{\eta}$, $w_j$, $z$ and $\nu$.

In the following subsections, we discuss the above two cases separately, i.e., $\nu =0$ and $\nu\neq 0$.
Note that when $R=N$, $\{w_i,i = R+1,R+2,\cdots,N\}$ is an empty set. Therefore, for a given $\nu$, we further distinguish the rank-deficient and full column rank cases, i.e., $R\leq N-1$ and $R=N$.

\subsection{Case $\nu = 0$}\label{subcasenu0}
\subsubsection{Case $R\leq N-1$}\label{subsubcase1}

In this case, $R\leq N-1$ means $\{w_i,i = R+1,R+2,\cdots,N\}$ is not an empty set, i.e., the primal variables $\mathbf{w}_{R+1:N}$ and the dual variables $\boldsymbol{\eta}_{R+1:N}$ exist.

When $\nu=0$, (\ref{nueta}) shows that
\begin{align}\label{etai0}
    \eta_i=0,\ i = R+1,R+2,\cdots,N,
\end{align}
and (\ref{KKTetai}) is always satisfied. Substituting $\nu=0$ in (\ref{KKTwj}) and (\ref{KKTz}), one obtains
\begin{subequations}\label{KKTwithnu0}
    \begin{align}
        &\frac{1}{2}\lambda_j^2-\frac{1}{2}\left|\widetilde{y}_{j}\right|\lambda_j z^{\frac{1}{2}}w_j^{-\frac{1}{2}}- \eta_j=0,\ j = 1,2,\cdots ,R,\label{KKTwnu0}\\
        &\sum_{j=1}^{R} \frac{1}{2}\widetilde{y}_{j}^2 -  \sum_{j=1}^{R}\frac{1}{2}\left|\widetilde{y}_{j}\right|\lambda_j w_j^{\frac{1}{2}}z^{-\frac{1}{2}}  + \frac{1}{2} \|\widetilde{{\mathbf y}}_{R+1:M}\|_2^2-\frac{1}{2} M  z^{-{1}}={0}.\label{KKTznu0}
    \end{align}
\end{subequations}
According to (\ref{KKTwnu0}), for $j = 1,2,\cdots, R$, $w_j$ can be expressed in terms of $z$ and $\eta_j$ as
\begin{align}\label{wjresult}
    w_j = \frac{\widetilde{y}_{j}^2 \lambda_j^2 z}{(\lambda_j^2 - 2\eta_j)^2}.
\end{align}
Substituting (\ref{wjresult}) into (\ref{KKTznu0}) and eliminating variables $\mathbf w_{1:R}$, one obtains 
\begin{align}\label{zequ}
    \sum_{j=1}^{R} \frac{1}{2}\widetilde{y}_{j}^2 -  \sum_{j=1}^{R}\frac{\widetilde{y}_{j}^2 \lambda_j^2}{2|\lambda_j^2 - 2\eta_j|}  + \frac{1}{2} \|\widetilde{{\mathbf y}}_{R+1:M}\|_2^2 -\frac{1}{2} M  z^{-{1}}
    ={0}.
\end{align}
According to (\ref{zequ}), $z$ can be solved as
\begin{align}\label{zresult}
    z=\frac{M}{\sum_{j=1}^{R}\widetilde{y}_{j}^2 \left(1-\frac{\lambda_j^2}{|\lambda_j^2 - 2\eta_j|} \right) +\|\widetilde{{\mathbf y}}_{R+1:M}\|_2^2}
\end{align}
in term of the dual variables $\boldsymbol{\eta}_{1:R}$.
According to (\ref{wjresult}) and (\ref{zresult}), $\mathbf w_{1:R}$ can be represented by dual variables $\boldsymbol{\eta}_{1:R}$ as
\begin{align}\label{wjresultnoz}
    w_j = \frac{\widetilde{y}_{j}^2 \lambda_j^2 }{(\lambda_j^2 - 2\eta_j)^2}\frac{M}{\left(\sum_{j=1}^{R}\widetilde{y}_{j}^2 \left(1-\frac{\lambda_j^2}{|\lambda_j^2 - 2\eta_j|} \right) +\|\widetilde{{\mathbf y}}_{R+1:M}\|_2^2\right)},\ j = 1,2,\cdots, R,
\end{align}
by substituting (\ref{zresult}) into (\ref{wjresult}).

Note that $\mathbf w_{1:R}$ (\ref{wjresultnoz}) must satisfy complementary slackness condition (\ref{KKTetaj}). According to the definition of $\widetilde{\mathbf{y}}$ given as $\widetilde{\mathbf{y}}=\mathbf{U}^{\rm T}\mathbf{y}=\boldsymbol{\Sigma}\widetilde{\mathbf{x}}+\mathbf{U}^{\rm T}\boldsymbol{\epsilon}_{\rm eq}$, $\widetilde{\mathbf{y}}$ follows Gaussian distribution $\mathcal{N}(\boldsymbol{\Sigma}\widetilde{\mathbf{x}},(\sigma_e^2\|\mathbf x\|_2^2+\sigma_{{\epsilon}}^2){\mathbf I}_M)$, which implies that the probability of $\widetilde{\mathbf{y}}=\mathbf{0}$ is zero. Without loss of generality, we let $\widetilde{\mathbf{y}}\neq\mathbf{0}$. In this way, $\mathbf{w}_{1:R}\neq\mathbf{0}$ according to (\ref{wjresultnoz}). Utilizing the complementary slackness condition (\ref{KKTetaj}), one has
\begin{align}\label{etaj0}
    \eta_j=0,\ j = 1,2,\cdots, R.
\end{align}

Substituting (\ref{etaj0}) into  (\ref{zresult}) and (\ref{wjresultnoz}), $z$ and $\mathbf{w}_{1:R}$ are
\begin{subequations}\label{zwresultend}
\begin{align}
    &z=\frac{M}{\|\widetilde{{\mathbf y}}_{R+1:M}\|_2^2},\label{zresultend}\\
    &w_j = \frac{\widetilde{y}_{j}^2  }{\lambda_j^2}\frac{M}{\|\widetilde{{\mathbf y}}_{R+1:M}\|_2^2},\ j = 1,2,\cdots, R,\label{wjresultend}
\end{align}
\end{subequations}
respectively.

Until now, the dual variables $\nu$ and $\boldsymbol{\eta}$, the primal variables $z$ and $\mathbf{w}_{1:R}$ are determined, and only the primal variables $\mathbf{w}_{R+1:N}$ remain unknown. 
Note that $\mathbf{w}_{R+1:N}$ must satisfy the constraint (\ref{sum1}).
Substituting $z$ (\ref{zresultend}) and $\mathbf{w}_{1:R}$ (\ref{wjresultend}) into  (\ref{sum1}), one obtains
\begin{align}\label{sumnozw}
    \left(\sigma_e^2 \sum_{j=1}^{R} \frac{ \widetilde{y}_{j}^2}{\lambda_j^2} + \sigma_{{\epsilon}}^2 \right) \frac{M}{\|\widetilde{{\mathbf y}}_{R+1:M}\|_2^2} + \sum_{i=R+1}^{N} \sigma_e^2 w_i =1.
\end{align}

Because $\mathbf{w}_{R+1:N}$ must satisfy $\mathbf{w}_{R+1:N}\geq \mathbf 0$ (\ref{dualvarj}), (\ref{sumnozw}) demonstrates that the optimal solution $\mathbf{w}_{R+1:N}$ exists if and only if 
\begin{align}\label{sumgeq0}
    S \triangleq \frac{1}{\sigma_e^2}-\left(\sum_{j=1}^{R} \frac{ \widetilde{y}_{j}^2}{\lambda_j^2} + \frac{\sigma_{{\epsilon}}^2}{\sigma_e^2 } \right) \frac{M}{\|\widetilde{{\mathbf y}}_{R+1:M}\|_2^2}\geq 0.
\end{align}
If (\ref{sumgeq0}) is not satisfied, this indicates that the precondition $\nu=0$ does not hold. In this setting, $\nu\neq 0$, and the details of solving the KKT conditions (\ref{KKTcon}) are deferred in Subsection \ref{subcasenunot0}.
If (\ref{sumgeq0}) is satisfied, for $R=N-1$, a unique solution $w_N$ exists. For $R<N-1$, any $\mathbf{w}_{R+1:N}$ satisfying (\ref{sumnozw}) and $\mathbf{w}_{R+1:N}\geq \mathbf{0}$ simultaneously are the optimal solutions.

Finally, the optimal $\widetilde{\mathbf x}^{\star}$ can be recovered from the optimal $\mathbf w^\star$ and $z^\star$. According to (\ref{defwz}) and $\mathbf{o}_{1:R}\geq \mathbf{0}$, one has $\mathbf{o}^\star_{1:R}=\sqrt{\mathbf{w}^\star_{1:R}}$ and 
\begin{align}\label{tstar}
    t^\star = \sqrt{z^\star}.
\end{align}
According to (\ref{ustar}), $\mathbf{u}^\star$ is
\begin{align}\label{utend}
&\mathbf{u}^\star_{1:R}=\sqrt{\mathbf{w}^\star_{1:R}}\odot \operatorname{sign}(\widetilde{\mathbf{y}}_{1:R}),\notag\\
&\mathbf{u}^\star_{R+1:N}=\pm \sqrt{\mathbf{w}^\star_{R+1:N}},
\end{align}
i.e., the sign of $\mathbf{u}^\star_{1:R}$ is consistent with that of $\widetilde{\mathbf{y}}_{1:R}$, and the sign of $\mathbf{u}^\star_{R+1:N}$ is arbitrary.

According to (\ref{defut}), $\widetilde{\mathbf x}^\star=\mathbf{u}^\star/t^\star$.
Utilizing (\ref{tstar}) and (\ref{utend}), one has
\begin{align}\label{xstar}
    &\widetilde{\mathbf x}^\star_{1:R} = \sqrt{\mathbf{w}^\star_{1:R}/z^\star}\odot \operatorname{sign}(\widetilde{\mathbf{y}}_{1:R}),\notag\\
    &\widetilde{\mathbf x}^\star_{R+1:N}=\pm \sqrt{\mathbf w_{R+1:N}^\star/  z^\star}.
\end{align}
Substituting (\ref{zwresultend}) into (\ref{xstar}), $\widetilde{\mathbf x}^\star_{1:R}$ can be further simplified as
\begin{align}\label{widetildexjMLnu0}
\widetilde{\mathbf x}^\star_{1:R}=\widetilde{\mathbf y}_{1:R}\oslash \boldsymbol{\lambda},
\end{align}
where $\oslash$ denote componentwise division. 


\subsubsection{Case $R=N$}\label{subsubcase2}
In this case, the set $\{w_i \mid i = R+1, R+2, \dots, N\}$ is empty, i.e., the primal variables $\mathbf{w}_{R+1:N}$ and the dual variables $\boldsymbol{\eta}_{R+1:N}$ vanish. Consequently, the zero-gradient condition (\ref{KKTwi}), the dual constraint (\ref{dualvari}), and the complementary slackness condition (\ref{KKTetai}) are no longer applicable and are therefore removed. The primal constraint (\ref{sum1}) is accordingly simplified to be
\begin{align}\label{sum1nowi}
    \sum_{j=1}^{N} \sigma _{e}^{2} w_j + \sigma_{{\epsilon}}^2 z=1.
\end{align}

Despite the removed equations, the solutions for the primal variables $\mathbf{w}$ and $z$ in (\ref{zwresultend}), as well as the dual variables $\boldsymbol{\eta}$ in (\ref{etaj0}), can still be derived by following the same procedure detailed in Subsubsection \ref{subsubcase1}, specifically from (\ref{KKTwithnu0}) to (\ref{zwresultend}).

Substituting (\ref{zwresultend}) into (\ref{xstar}), one has
\begin{align}\label{subcase2xstar}
\widetilde{\mathbf x}^\star=\widetilde{\mathbf y}\oslash \boldsymbol{\lambda}.
\end{align}

According to (\ref{definextilde}), $\widehat{\mathbf x}_{\rm ML}$ is 
\begin{align}\label{xMLend}
    \widehat{\mathbf x}_{\rm ML}={\mathbf V}{\widetilde{\mathbf x}^\star}.
\end{align}

For the optimization problem defined in (\ref{optfunczw}), an alternative interpretation can be considered when $\nu = 0$. In this case, constraint (\ref{equcons}) is eliminated, and (\ref{inequconsj}) is the implicit constraint of the objective function (\ref{optobj}). Since $\mathbf{w}_{R+1:N}$ no longer appears in the problem, it becomes decoupled from the optimization and constraint (\ref{inequconsi}) is automatically satisfied. As a result, problem (\ref{optfunczw}) reduces to the unconstrained optimization problem with objective function (\ref{optobj}) and variables $\mathbf{w}_{1:R}$ and $z$, requiring only the zero gradient conditions (\ref{KKTwj}) and (\ref{KKTz}) to be satisfied. These $R+1$ equations determine the optimal $\mathbf{w}_{1:R}^\star$ and $z^\star$, while simultaneously ensuring that constraints (\ref{inequconsj}) and (\ref{inequconsi}) are fulfilled. Therefore, when $\nu = 0$, the original constrained optimization problem is effectively transformed into an unconstrained one and the solution is (\ref{subcase2xstar}).



\subsection{Case $\nu \neq 0$}\label{subcasenunot0}
\subsubsection{Case $R\leq N-1$}\label{subsubcase3}

In this case, $R\leq N-1$ means $\{w_i,i = R+1,R+2,\cdots,N\}$ is not an empty set, i.e., the primal variables $\mathbf{w}_{R+1:N}$ and the dual variables $\boldsymbol{\eta}_{R+1:N}$ exist.

When $\nu \neq 0$, (\ref{dualvari}) and (\ref{nueta}) show that 
\begin{align}\label{etainu}
    \eta_i=\nu\sigma_e^2> 0,\ i = R+1,R+2,\cdots,N,
\end{align}
and (\ref{wieta}) can only hold if
\begin{align}\label{wi0}
    w_i=0,\ i = R+1,R+2,\cdots,N.
\end{align}
Substituting (\ref{wi0}) into  (\ref{sum1}), one obtains
\begin{align}\label{sumnowi}
    \sum_{j=1}^{R} \sigma _{e}^{2} w_j +\sigma_{{\epsilon}}^2 z=1,
\end{align}
which involves with $z$ and $\mathbf{w}_{1:R}$. According to (\ref{KKTwj}), $\mathbf{w}_{1:R}$ can be solved as
\begin{align}\label{wjwithzwi0}
    w_j = \frac{\widetilde{y}_{j}^2 \lambda_j^2}{(\lambda_j^2 + 2\nu\sigma_e^2- 2\eta_j)^2}z,\ j = 1,2,\cdots, R
\end{align}
in terms of $z$, $\nu$ and $\boldsymbol{\eta}_{1:R}$.

Substituting (\ref{wjwithzwi0}) into (\ref{sumnowi}) and eliminating variables $\mathbf{w}_{1:R}$, one has
\begin{align}\label{sumz}
    \sum_{j=1}^{R} \sigma _{e}^{2} \frac{\widetilde{y}_{j}^2 \lambda_j^2}{(\lambda_j^2 + 2\nu\sigma_e^2- 2\eta_j)^2}z +\sigma_{{\epsilon}}^2 z=1.
\end{align}
According to (\ref{sumz}), $z$ can be represented in terms of dual variables $\nu$ and $\boldsymbol{\eta}_{1:R}$ as
\begin{align}\label{zstarwi0}
    z = \frac{1}{\sum_{j=1}^{R}\frac{ \sigma_e^2 \widetilde{y}_{j}^2 \lambda_j^2}{(\lambda_j^2 + 2\nu\sigma_e^2- 2\eta_j)^2} + \sigma_{{\epsilon}}^2 }.
\end{align}
According to (\ref{wjwithzwi0}) and (\ref{zstarwi0}), $\mathbf{w}_{1:R}$ can be represented by dual variables $\nu$ and $\boldsymbol{\eta}_{1:R}$ as
\begin{align}\label{wjresultwi0}
    w_j = \frac{1}{\left( \sum_{j=1}^{R} \frac{ \sigma_e^2 \widetilde{y}_{j}^2 \lambda_j^2}{(\lambda_j^2 + 2\nu\sigma_e^2- 2\eta_j)^2} + \sigma_{{\epsilon}}^2 \right)}\frac{\widetilde{y}_{j}^2 \lambda_j^2}{(\lambda_j^2 + 2\nu\sigma_e^2- 2\eta_j)^2},\ j = 1,2,\cdots, R,
\end{align}
by substituting (\ref{zstarwi0}) into (\ref{wjwithzwi0}).

Note that $\mathbf w_{1:R}$ (\ref{wjresultwi0}) must satisfy complementary slackness condition (\ref{KKTetaj}). Similar to the previous analysis shown in Subsubsection \ref{subsubcase2}, we let $\widetilde{\mathbf{y}}\neq\mathbf{0}$. In this way, $\mathbf{w}_{1:R}\neq\mathbf{0}$ according to (\ref{wjresultwi0}). Utilizing the complementary slackness condition (\ref{KKTetaj}), one has
\begin{align}\label{etaj0wi0}
    \eta_j=0,\ j = 1,2,\cdots, R.
\end{align}

Substituting (\ref{etaj0wi0}) into (\ref{zstarwi0}) and (\ref{wjresultwi0}), the primal variables $z$ and $\mathbf{w}_{1:R}$ can be obtained as
\begin{subequations}\label{zwnu}
    \begin{align}
        &z = \frac{1}{\sum_{j=1}^{R}\frac{ \sigma_e^2 \widetilde{y}_{j}^2 \lambda_j^2}{(\lambda_j^2 + 2\nu\sigma_e^2)^2} + \sigma_{{\epsilon}}^2 },\label{znu}\\
        &w_j = \frac{1}{\left( \sum_{j=1}^{R} \frac{ \sigma_e^2 \widetilde{y}_{j}^2 \lambda_j^2}{(\lambda_j^2 + 2\nu\sigma_e^2)^2} + \sigma_{{\epsilon}}^2 \right)}\frac{\widetilde{y}_{j}^2 \lambda_j^2}{(\lambda_j^2 + 2\nu\sigma_e^2)^2},\ j = 1,2,\cdots, R,\label{wnu}
    \end{align}
\end{subequations}
which are functions of $\nu$. 

Note that the primal variables $z$ (\ref{znu}), $\mathbf{w}_{1:R}$ (\ref{wnu}) and the dual variables $\boldsymbol{\eta}_{R+1:N}$ (\ref{etainu}) are related to the dual variable $\nu$.
Substituting (\ref{zwnu}) into (\ref{KKTz}), we obtain a scalar equation $g(\nu) = 0$, where the function $g(\nu)$ is defined as
\begin{align}\label{Rnufunc}
g(\nu) \triangleq &\sum_{j=1}^{R} \left(  \frac{\widetilde{y}_{j}^2 \nu\sigma_e^2}{\lambda_j^2 + 2 \nu \sigma_e^2} \right)  + \frac{1}{2}\|\widetilde{{\mathbf y}}_{R+1:M}\|_2^2 - \frac{M}{2}{\left( \sum_{j=1}^{R} \frac{ \sigma_e^2 \widetilde{y}_{j}^2 \lambda_j^2}{(\lambda_j^2 + 2\nu\sigma_e^2)^2} + \sigma_{{\epsilon}}^2 \right)} \notag\\&+ \nu \sigma_{{\epsilon}}^2.
\end{align}
For $g(\nu)$, we have the following result.
\begin{proposition}\label{prop2}
   $g(\nu)$ is monotonically increasing in the interval $\left(-\frac{\lambda_R^2}{2\sigma_e^2}, \frac{M}{2} \right]$, and a solution such that $g(\nu)=0$ exists.
\end{proposition}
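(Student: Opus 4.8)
The plan is to view $g$ in (\ref{Rnufunc}) as a smooth scalar function on the stated interval, show that $g'(\nu)>0$ there so that $g$ is strictly increasing, and then locate a zero by examining the behaviour at the two endpoints.

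First I would note that for $\nu>-\tfrac{\lambda_R^2}{2\sigma_e^2}$ every denominator in (\ref{Rnufunc}) is strictly positive: since $\lambda_1\ge\cdots\ge\lambda_R>0$, one has $\lambda_j^2+2\nu\sigma_e^2\ge\lambda_R^2+2\nu\sigma_e^2>0$ for all $j=1,\dots,R$, so $g$ is well defined and $C^\infty$ on $\left(-\tfrac{\lambda_R^2}{2\sigma_e^2},\tfrac{M}{2}\right]$ (indeed on all of $\left(-\tfrac{\lambda_R^2}{2\sigma_e^2},\infty\right)$; the particular right endpoint $\tfrac{M}{2}$ plays no role in the monotonicity and is used only later as an explicit point where $g\ge 0$, which keeps the bracket bounded). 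Differentiating (\ref{Rnufunc}) termwise, the key algebraic fact is that in $\tfrac{d}{d\nu}\!\left(\tfrac{\widetilde{y}_j^2\,\nu\sigma_e^2}{\lambda_j^2+2\nu\sigma_e^2}\right)$ the $\nu$-linear contributions from the quotient rule cancel, leaving $\tfrac{\sigma_e^2\lambda_j^2\widetilde{y}_j^2}{(\lambda_j^2+2\nu\sigma_e^2)^2}\ge 0$; likewise $\tfrac{d}{d\nu}\!\left(-\tfrac{M}{2}\tfrac{\sigma_e^2\lambda_j^2\widetilde{y}_j^2}{(\lambda_j^2+2\nu\sigma_e^2)^2}\right)=\tfrac{2M\sigma_e^4\lambda_j^2\widetilde{y}_j^2}{(\lambda_j^2+2\nu\sigma_e^2)^3}\ge 0$; and $\tfrac{d}{d\nu}(\nu\sigma_\epsilon^2)=\sigma_\epsilon^2>0$, while $\tfrac12\|\widetilde{\mathbf{y}}_{R+1:M}\|_2^2$ and $-\tfrac{M}{2}\sigma_\epsilon^2$ are constants. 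Adding these contributions,
\[
g'(\nu)=\sigma_e^2\sum_{j=1}^{R}\frac{\lambda_j^2\widetilde{y}_j^2}{(\lambda_j^2+2\nu\sigma_e^2)^2}+2M\sigma_e^4\sum_{j=1}^{R}\frac{\lambda_j^2\widetilde{y}_j^2}{(\lambda_j^2+2\nu\sigma_e^2)^3}+\sigma_\epsilon^2>0,
\]
since on the interval each summand is nonnegative and $\sigma_\epsilon^2>0$. Hence $g$ is strictly increasing.

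Next I would establish existence of a zero with the intermediate value theorem. As $\nu\downarrow-\tfrac{\lambda_R^2}{2\sigma_e^2}$, the denominators $\lambda_j^2+2\nu\sigma_e^2$ with $\lambda_j=\lambda_R$ tend to $0^+$; in $\sum_j\tfrac{\widetilde{y}_j^2\nu\sigma_e^2}{\lambda_j^2+2\nu\sigma_e^2}$ the corresponding terms have negative numerator (there $\nu<0$) and in $-\tfrac{M}{2}\sum_j\tfrac{\sigma_e^2\lambda_j^2\widetilde{y}_j^2}{(\lambda_j^2+2\nu\sigma_e^2)^2}$ a negative prefactor, so both diverge to $-\infty$ while all remaining terms stay bounded; therefore $g(\nu)\to-\infty$. (This invokes the standing genericity $\widetilde{\mathbf{y}}\ne\mathbf{0}$ used earlier, in the precise form that $\widetilde{y}_j\ne 0$ for some $j$ with $\lambda_j=\lambda_R$, which holds almost surely.) At the right endpoint, substituting $\nu=\tfrac{M}{2}$ cancels the terms $-\tfrac{M}{2}\sigma_\epsilon^2$ and $\nu\sigma_\epsilon^2$, and using $\tfrac{1}{\lambda_j^2+M\sigma_e^2}-\tfrac{\lambda_j^2}{(\lambda_j^2+M\sigma_e^2)^2}=\tfrac{M\sigma_e^2}{(\lambda_j^2+M\sigma_e^2)^2}$ one gets
\[
g\!\left(\tfrac{M}{2}\right)=\frac{M^2\sigma_e^4}{2}\sum_{j=1}^{R}\frac{\widetilde{y}_j^2}{(\lambda_j^2+M\sigma_e^2)^2}+\frac12\|\widetilde{\mathbf{y}}_{R+1:M}\|_2^2\ge 0.
\]
Since $g$ is continuous, tends to $-\infty$ as $\nu\downarrow-\tfrac{\lambda_R^2}{2\sigma_e^2}$, and satisfies $g(\tfrac{M}{2})\ge 0$, there exists $\nu^\star\in\left(-\tfrac{\lambda_R^2}{2\sigma_e^2},\tfrac{M}{2}\right]$ with $g(\nu^\star)=0$, and strict monotonicity makes it unique (so the GRV-ML bisection can bracket it on this bounded interval).

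I do not anticipate a genuine obstacle: the argument is elementary calculus. The only points requiring care are bookkeeping ones — verifying the cancellation of the $\nu$-linear terms in the derivative of the first sum (without it the sign of $g'$ is not transparent), the cancellation of the $\sigma_\epsilon^2$ contributions at $\nu=\tfrac{M}{2}$, and justifying the blow-up at the left endpoint, which rests on at least one $\widetilde{y}_j$ with $\lambda_j=\lambda_R$ being nonzero — available only in the almost-sure sense already adopted in the preceding subsections.
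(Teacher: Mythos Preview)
Your proposal is correct and follows essentially the same route as the paper: compute $g'(\nu)$, show it is strictly positive on $\left(-\tfrac{\lambda_R^2}{2\sigma_e^2},\infty\right)$, then combine $g(\nu)\to-\infty$ at the left endpoint with $g(M/2)\ge 0$ to obtain a zero by the intermediate value theorem. Your treatment is in fact slightly more careful than the paper's, since you make explicit the almost-sure condition that some $\widetilde{y}_j$ with $\lambda_j=\lambda_R$ be nonzero for the blow-up to occur, a point the paper leaves implicit.
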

\begin{proof}
It can be seen that $g(\nu)$ has $R$ discontinuity points $\left\{-\frac{\lambda_j^2}{2\sigma_e^2}\right\}_{j=1}^R$, and the most right discontinuity point is $-\frac{\lambda_R^2}{2\sigma_e^2}$. We first compute the first order derivative of $g(\nu)$ as 
\begin{align}
g'(\nu)=\sum_{j=1}^{R} \frac{\widetilde{y}_{j}^2 \lambda_j^2 \sigma_e^2}{(\lambda_j^2 + 2 \nu \sigma_e^2)^2} + 2M \sum_{j=1}^{R} \frac{\sigma_e^4 \widetilde{y}_{j}^2 \lambda_j^2}{(\lambda_j^2 + 2 \nu \sigma_e^2)^3} +  \sigma_{{\epsilon}}^2.
\end{align}
It can be seen that $g'(\nu)>0$ for $\nu>-\frac{\lambda_R^2}{2\sigma_e^2}$, which implies that $g(\nu)$ is monotonically increasing in the interval $(-\frac{\lambda_R^2}{2\sigma_e^2},\infty)$. In addition, $\lim_{\nu\rightarrow -\lambda_R^2/(2\sigma_e^2)^{+}}g(\nu)=-\infty$ and $g\left(\frac{M}{2}\right)$ satisfies 
\begin{align}
    g\left(\frac{M}{2}\right) = \sum_{j=1}^{R} \frac{M^2 \sigma_e^4 \widetilde{y}_{j}^2}{2(\lambda_j^2 + M \sigma_e^2)^2} + \frac{1}{2} \|\widetilde{\mathbf{y}}_{R+1:M}\|_2^2 \geq 0,
\end{align}
where the equality is obtained when $\sigma_e^2=0$ and $R= M$. Because $-\frac{\lambda_R^2}{2\sigma_e^2}<0\leq\frac{M}{2}$ and $g(\nu)$ is monotonically increasing in the interval $(-\frac{\lambda_R^2}{2\sigma_e^2},\frac{M}{2}]$, Proposition \ref{prop2} is established.
\end{proof}

According to (\ref{etainu}), it can be concluded that $\nu > 0$ in this case. This implies that $g(0)>0$. Therefore, the search interval for the optimal value $\nu^\star$ can be further narrowed to $(0,\frac{M}{2}]$.
Now we give the details of the implementation in this case. We use the bisection method to obtain the optimal $\nu^{\star}$ satisfying $g(\nu)=0$. 
Then substituting (\ref{zwnu}) into (\ref{xstar}), one has
\begin{align}\label{xstarwi0}
    &\widetilde{\mathbf x}^\star_{1:R}=\left(\widetilde{\mathbf{y}}_{1:R}\odot \boldsymbol{\lambda}\right)\oslash\left(\boldsymbol{\lambda}^2+2\nu^\star \sigma_e^2\right)=\widetilde{\mathbf{y}}_{1:R}\oslash\left(\boldsymbol{\lambda}+2\nu^\star \sigma_e^2\oslash{\boldsymbol \lambda}\right),\notag\\
    &\widetilde{\mathbf x}^\star_{R+1:N}=0.
\end{align}
According to (\ref{definextilde}), $\widehat{\mathbf x}_{\rm ML}$ is 
\begin{align}\label{xMLwi}
    \widehat{\mathbf x}_{\rm ML}={\mathbf V}{\widetilde{\mathbf x}^\star}.
\end{align}

\subsubsection{Case $R=N$}\label{subsubcase4}
When $R = N$, $\{w_i,i = R+1,R+2,\cdots,N\}$ is an empty set. Consequently, the primal variables $\mathbf{w}_{R+1:N}$ and the dual variables $\boldsymbol{\eta}_{R+1:N}$ vanish.
And the zero gradient condition (\ref{KKTwi}), the dual constraint (\ref{dualvari}) and the complementary slackness condition (\ref{KKTetai}) are no longer applicable and are therefore removed.

Despite the removed equations, the solutions for the primal variables $\mathbf{w}$ and $z$ in (\ref{zwnu}), as well as the dual variables $\boldsymbol{\eta}$ in (\ref{etaj0wi0}), can still be derived by following the same procedure detailed in Subsubsection \ref{subsubcase3}, specifically from (\ref{sumnowi}) to (\ref{zwnu}).

Substituting (\ref{zwnu}) into (\ref{xstar}), the optimal $\widetilde{\mathbf x}^\star$ can be obtained as 
\begin{align}\label{widetildexML}
    \widetilde{\mathbf x}^\star=\left(\widetilde{\mathbf{y}}\odot \boldsymbol{\lambda}\right)\oslash\left(\boldsymbol{\lambda}^2+2\nu^\star \sigma_e^2\right)=\widetilde{\mathbf{y}}\oslash\left(\boldsymbol{\lambda}+2\nu^\star \sigma_e^2\oslash{\boldsymbol \lambda}\right).
\end{align} 
$\widehat{\mathbf x}_{\rm ML}$ is 
\begin{align}\label{xML}
    \widehat{\mathbf x}_{\rm ML}={\mathbf V}{\widetilde{\mathbf x}^\star}=\left({\mathbf H}^{\rm T}{\mathbf H}+2\nu^\star\sigma_e^2{\mathbf I}_N\right)^{-1}{\mathbf H}^{\rm T}{\mathbf y}.
\end{align} 
Note that the form of $\widehat{\mathbf x}_{\rm ML}$ (\ref{xML}) is the same as that of TLS and LS, except that $2\nu\sigma_e^2$ is replaced with $\alpha_{\rm TLS}$ and $0$ \cite{EldarCRB}.

To better understand the consistency between the special case $R = N$ in Subsubsection \ref{subsubcase2} and Subsubsection \ref{subsubcase4}, we calculate $g(0)$ as
\begin{align}\label{g0}
    g(0) = \frac{1}{2} \|\widetilde{{\mathbf y}}_{R+1:M}\|_2^2  - \frac{M}{2} \left( \sum_{j=1}^{R} \frac{\sigma_e^2 \widetilde{y}_{j}^2}{\lambda_j^2} + \sigma_{{\epsilon}}^2 \right).
\end{align}
Note that when (\ref{g0}) equals zero, it implies that $\nu = 0$, indicating a special case in which the dual variable associated with the primary constraint vanishes. Notably, $g(0)$ (\ref{g0}) and $S$ (\ref{sumgeq0}) satisfies 
\begin{align}\label{Sg0}
    S=\frac{g(0)}{\frac{1}{2} \|\widetilde{{\mathbf y}}_{R+1:M}\|_2^2\sigma_e^2}.
\end{align}
Substituting $\nu = 0$ into (\ref{widetildexML}), $\widetilde{\mathbf{x}}^\star$ (\ref{widetildexML}) is consistent with (\ref{subcase2xstar}) shown in Subsubsection \ref{subsubcase2}.
This makes sense as the analysis is carried out under the condition $R=N$. It can be observed that although this subsection focuses on the case where $\nu \neq 0$, the resulting conclusions are consistent with that obtained when $\nu = 0$. Therefore, this scenario extends the result derived in Subsubsection \ref{subsubcase2} to a more general one.


Based on the theoretical derivations in Subsections \ref{subcasenu0} and \ref{subcasenunot0}, the solution to the ML problem can be categorized into four distinct cases depending on $R=\operatorname{rank}(\mathbf{H})$ and $S$ (\ref{sumgeq0}).
When $R \leq N - 1$ and $S \geq 0$, the conditions in Subsubsection \ref{subsubcase1} are satisfied, and the  ML solution is obtained by (\ref{xstar}) and (\ref{xMLend}). If $R \leq N - 1$ and $S < 0$, the scenario falls under the conditions outlined in Subsubsection \ref{subsubcase3}, where (\ref{xMLwi}) provides the optimal solution to the ML problem.
In the case where $R = N$ and $S = 0$, as described in Subsubsection \ref{subsubcase2}, the ML solution can be obtained from (\ref{subcase2xstar}) and (\ref{xMLend}). Finally, when $R = N$ and $S \neq 0$, Subsubsection \ref{subsubcase4} applies, and the optimal solution is determined using (\ref{xML}).

According to (\ref{Sg0}), the sign of $S$ is consistent with the sign of $g(0)$. Therefore, when solving $g(\nu^{\star}) = 0$ using the bisection method, the sign of $S$ can be utilized to narrow the search interval. Specifically, if $S > 0$, which implies $g(0) > 0$, the interval can be reduced to $\left(-\frac{\lambda_R^2}{2\sigma_e^2}, 0 \right)$. Conversely, if $S < 0$, implying $g(0) < 0$, the interval can be narrowed to $\left(0, \frac{M}{2} \right]$.

Following the analysis, the overall solution procedure proceeds by first computing the rank $R$ and the scalar $S$, then a corresponding closed-form or numerical solution is used to compute the ML estimate, which is named as GRV-ML. The implementation steps of GRV-ML are summarized in Algorithm \ref{algor}.

\begin{algorithm}
\renewcommand{\algorithmicrequire}{\textbf{Input:}}
\caption{GRV-ML}
\label{algor}
\begin{algorithmic}[1]
\REQUIRE  $\mathbf H$, $\mathbf{y}$, $\sigma_e^2$ and $\sigma_{{\epsilon}}^2$
    \STATE $[\mathbf U,\boldsymbol\Sigma,\mathbf V] = \operatorname{svd}(\mathbf H)$, $R=\operatorname{rank}(\mathbf{H})$
    \STATE $\boldsymbol{\lambda} = \operatorname{diag}(\boldsymbol\Sigma_{1:R,1:R})$, where $\boldsymbol{\Sigma}_{1:R,1:R}$ denotes the first $R$ rows and columns of $\boldsymbol{\Sigma}$, and
    $ \lambda_1 \geq \lambda_2 \geq \dots \geq \lambda_R > 0 $.
    \STATE $\widetilde{\mathbf{y}}={{\mathbf{U}}^{\text{T}}}\mathbf{y}$
    \IF{$R<M$}
    \STATE $S=\frac{g(0)}{\frac{1}{2} \|\widetilde{{\mathbf y}}_{R+1:M}\|_2^2\sigma_e^2}$\label{cals}
    \ELSE
    \STATE $R=M$
    \STATE $S = -\infty$
    \ENDIF
    \IF{$R\leq N-1$}
    \IF{$S\geq 0$}
    \STATE $\nu=0$\label{stepcase1}
    \STATE $z^\star=\frac{M}{\|\widetilde{{\mathbf y}}_{R+1:M}\|_2^2}$
    \STATE \text{Choose} $\mathbf w_{R+1:N}^\star \geq 0$ \text{such that} $\sum_{i=R+1}^{N} w_i^\star = S$.
    \STATE $\widetilde{\mathbf x}^\star_{1:R}=\widetilde{\mathbf y}_{1:R}\oslash \boldsymbol{\lambda}$
    \STATE$\widetilde{\mathbf x}^\star_{R+1:N}=\pm \sqrt{\mathbf w_{R+1:N}^\star/  z^\star}$\label{stepcase1end}
    \ELSE
    \STATE Find $\nu^\star$ satisfying $g(\nu^\star) = 0$ via bisection method in the interval $\left(0, \frac{M}{2} \right]$.\label{stepcase2}
    \STATE $\widetilde{\mathbf x}^\star_{1:R}=\widetilde{\mathbf{y}}_{1:R}\oslash\left(\boldsymbol{\lambda}+2\nu^\star \sigma_e^2\oslash{\boldsymbol \lambda}\right)$
    \STATE $\widetilde{\mathbf x}^\star_{R+1:N}=0$\label{stepcase2end}
    \ENDIF
    \ELSE
    \IF{$S=0$}
    \STATE $\widetilde{\mathbf x}^\star=\widetilde{\mathbf y}\oslash \boldsymbol{\lambda}$\label{stepcase3}
    \ELSIF{$S>0$}
    \STATE Find $\nu^\star$ satisfying $g(\nu^\star) = 0$ via bisection method in the interval $\left(-\frac{\lambda_R^2}{2\sigma_e^2}, 0 \right)$.\label{stepcase4}
    \STATE $\widetilde{\mathbf x}^\star=\widetilde{\mathbf{y}}\oslash\left(\boldsymbol{\lambda}+2\nu^\star \sigma_e^2\oslash{\boldsymbol \lambda}\right)$
    \ELSE 
    \STATE Find $\nu^\star$ satisfying $g(\nu^\star) = 0$ via bisection method in the interval $\left(0, \frac{M}{2} \right]$.
    \STATE $\widetilde{\mathbf x}^\star=\widetilde{\mathbf{y}}\oslash\left(\boldsymbol{\lambda}+2\nu^\star \sigma_e^2\oslash{\boldsymbol \lambda}\right)$\label{stepcase4end}
    \ENDIF
    \ENDIF
    \STATE \textbf{return} $\widehat{\mathbf x}_{\rm ML}={\mathbf V}{\widetilde{\mathbf x}^\star}$
\end{algorithmic}
\end{algorithm}

We analyze the computation complexity of the proposed algorithm, the method \cite{SPLeldar} and the EM algorithm \cite{EldarCRB}. Let ${\rm Iter}$ denote the number of iterations in bisection methods of the proposed algorithm and the method \cite{SPLeldar}, and the number of iterations of the EM algorithm \cite{EldarCRB}. 
Since \cite{SPLeldar} and \cite{EldarCRB} consider only the full column rank case, i.e., $R = N$, their methods correspond to the case discussed in Subsubsections \ref{subsubcase2} and \ref{subsubcase4}, which are covered by Steps \ref{stepcase3}-\ref{stepcase4end} in Algorithm \ref{algor}. Moreover, as $S$ is a function of the continuous random variable $\widetilde{\mathbf{y}}$, the probability of $S = 0$ is zero. Therefore, in the comparison, we restrict our calculation to the solution of $\widetilde{\mathbf{x}}$ using Steps \ref{stepcase4}-\ref{stepcase4end} of Algorithm \ref{algor}.
For the EM algorithm, the computation complexity is dominated by $\overline{\mathbf{G}_N^{\rm T}\mathbf{G}_N}$ and $\overline{\mathbf{G}_N}\in\mathbb{R}^{M\times N}$ for each iteration, which takes $\mathcal O(MN^2\rm{Iter})$, where $\overline{\mathbf{G}_N}$ is the estimate of the measurement matrix. For the method \cite{SPLeldar}, the computation complexity is dominated by the matrix multiplication  $\left(\mathbf{H}^{\rm T}  \mathbf{H}+\alpha \mathbf{I}\right)^{\dagger} \mathbf{H}^{\rm T} \mathbf{y}$, which can be  simplified by using an economy SVD for $\mathbf{H}^{\rm T} \mathbf{H}$ and takes $\mathcal O(MN^2+N^2\rm{Iter})$\cite{matrixcomp}. For the proposed algorithm, only an economy SVD is needed as $\|\widetilde{\mathbf y}_{R+1:M}\|_2^2$ can be computed via $\|\widetilde{\mathbf y}_{R+1:M}\|_2^2=\|\widetilde {\mathbf y}\|_2^2-\|\widetilde{\mathbf y}_{1:R}\|_2^2$. The computation consists of an economy SVD step, a bisection method to obtain $\nu^{\star}$, and the recovery of $\widehat{\mathbf x}_{\rm ML}$, which takes $\mathcal O(MN^2)$, $\mathcal O(N{\rm Iter})$ and $\mathcal O(MN)$, respectively, and the overall computation complexity is $\mathcal O(MN^2+N{\rm Iter})$, which is far more less than the above two approaches.

\section{Numerical Simulation}
In this section, several numerical experiments are conducted to validate the theoretical results of the proposed method under general settings and illustrate the performance of the ML estimator.

\begin{figure*}
    \centering
    \subfigure[]{
    \label{MLEfunction}
    \includegraphics[width = 2.3in]{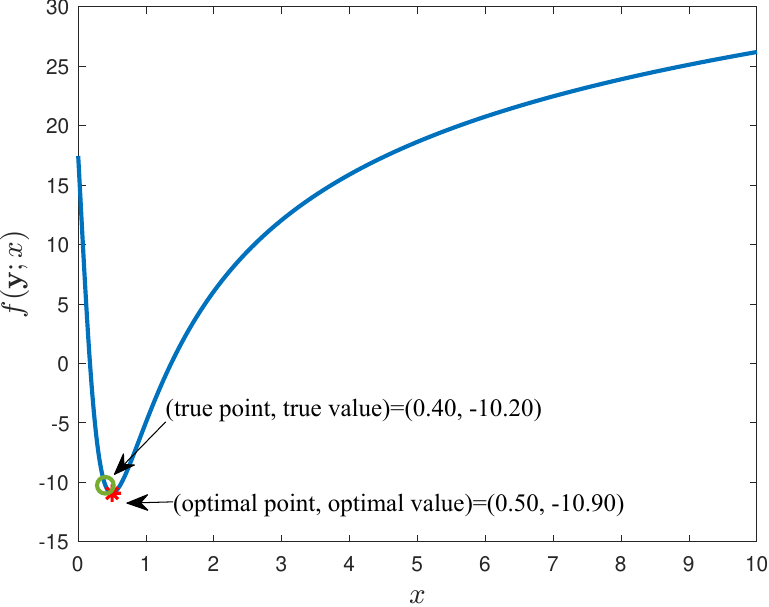}}
    \subfigure[]{
    \label{convexverify}
    \includegraphics[width = 2.6in]{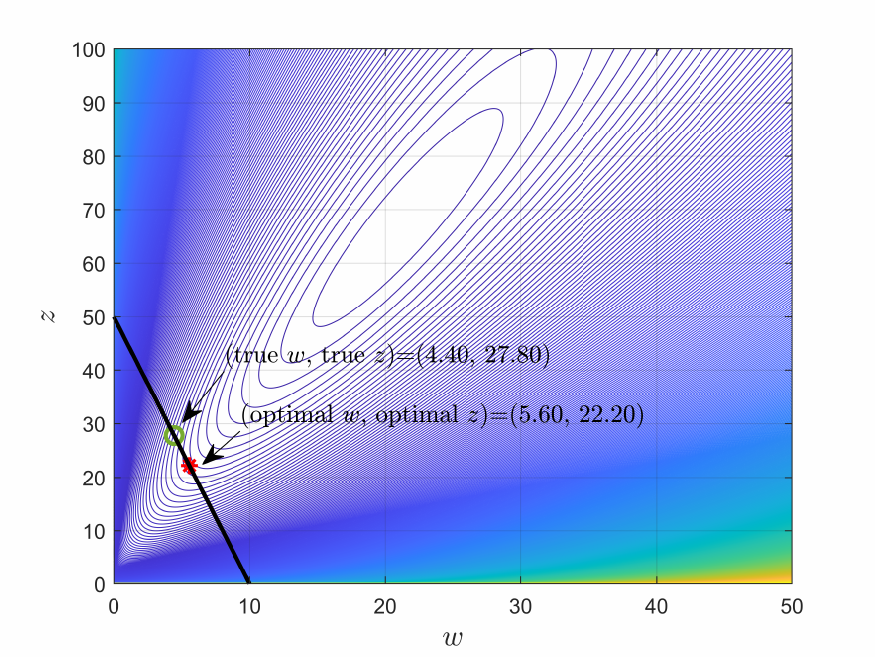}}
	\caption{An example illustrating how we transform the original unconstrained quasi-convex problem to a constrained convex problem by using the lifting technique. 
    (a): the MLE function $f({\mathbf y};x)$ with respect to $x$ which is a quasi-convex function. (b): contour plot of the constrained optimization function after the lifting technique. The black line represents the linear constraint, the red star represents the optimal point and the optimal value and the green circle represents the true value.}\label{liftfigure}
\end{figure*}

\subsection{Convex Verification}
In this subsection, we use a simple example to illustrate how the original quasiconvex objective function is transformed to a linear equality constrained convex optimization problem, as shown in Fig. \ref{liftfigure}. Let $N=1$, $M=4$, ${\mathbf H}=[0.8084,0.7673,0.6168,0.5360]^{\rm T}$, $\sigma_e^2=0.10$, $\sigma_{{\epsilon}}^2=0.02$, $x=0.40$ and $V=1$ is the right singular vector of ${\mathbf H}$. Fig. \ref{MLEfunction} shows the objective function, which is obviously nonconvex but quasiconvex. The optimal point $\widehat{x}_{\rm ML}$ and the optimal value are $0.50$ and $-10.90$, respectively. 

For our modeling, we use the lifting technique and the number of variables is $2$, i.e., $w$ and $z$. Fig. \ref{convexverify} presents the contour lines of the objective function.
The increasing density of contour lines visually suggests a sharper increase of the transformed MLE function, which can be viewed as a indicator of convexity. 
The optimality condition at the red star can be verified geometrically: The line $\sigma _{e}^{2} w+\sigma_{{\epsilon}}^2 z=1$ is tangent to the contour line of the lifted function through $(w^{\star},z^{\star})$.
The optimal point is $(w^{\star},z^{\star})=(5.60,22.20)$. The optimal value is -10.90, which equals to the optimal value of the original problem shown in Fig. \ref{MLEfunction}. In this way, $x^{\star}$ can be calculated by $x^{\star}=\operatorname{sign}(y_1)V\sqrt{w^{\star}/z^{\star}}\approx 0.50$. It can be seen that these two optimization problems are equivalent.

\subsection{Examples for the Four Cases}
According to the theoretical analysis in Subsections \ref{subcasenu0} and \ref{subcasenunot0}, the solution to the ML problem (\ref{mle}) can be categorized into four distinct cases: $R \leq N - 1$ and $S \geq 0$ (Sec. \ref{subsubcase1}); $R \leq N - 1$ and $S < 0$ (Sec. \ref{subsubcase3}); $R = N$ and $S = 0$ (Sec. \ref{subsubcase2}); and $R = N$ with $S \neq 0$ (Sec. \ref{subsubcase4}). Among these, cases when $R \leq N - 1$ may correspond to either underdetermined or overdetermined scenarios, whereas cases when $R = N$ can only occur in the overdetermined setting.
In this subsection, examples for these cases are presented to validate the correctness of the ML solution derived from the KKT conditions. As the expression for $S$ in Eq. (\ref{sumgeq0}) involves continuous random variables, the probability of $S=0$ is zero. Thus the case in which $S=0$ is not presented in the examples.

\subsubsection{Example 1 where $R \leq N - 1$ and $S\geq0$}
The first example shows the case where $R \leq N - 1$ and $S\geq0$, as discussed in Subsubsection \ref{subsubcase1}.
The parameters are set as follows: $M = 4$, $N = 2$, $\mathbf{x} = [0.33, 0.62]^{\rm T}$, $\mathbf{y} = [-0.16, 0.28, -0.30,-1.17]^{\rm T}$, and
\begin{align}
\mathbf{H} = 
\begin{bmatrix}
-0.44 & -0.43 \\
0.48 & 0.46 \\
-0.89 & -0.85 \\
-1.07 & -1.03 \\
\end{bmatrix},
\end{align}
with $R =\operatorname{rank}(\mathbf{H}) = 1= N - 1<N=2$, $\sigma_e^2 = 0.10$, and $\sigma_{{\epsilon}}^2 = 0.03$. According to Eq. (\ref{sumgeq0}), $S$ is calculated to be $S = 0.2175$. Because $R\leq N-1$ and $S>0$, Eq. (\ref{xstar}) can be used to find the optimal solution  $\widetilde{\mathbf{x}}^\star$, and the number of optimal solutions is $2$. The optimal solutions $\widetilde{\mathbf{x}}^\star$ are $[0.53,0.41]^{\rm T}$ and $[0.53,-0.41]^{\rm T}$.

According to $\widehat{\mathbf x}_{\rm ML}={\mathbf V}{\widetilde{\mathbf x}^\star}$, the ML estimates are $\widehat{\mathbf x}_{\rm{ML}1} = [0.66, 0.07]^{\rm T}$ and $\widehat{\mathbf x}_{\rm{ML}2} = [0.10, 0.66]^{\rm T}$. 
\begin{figure}
	\centering
	\includegraphics[width = 4in]{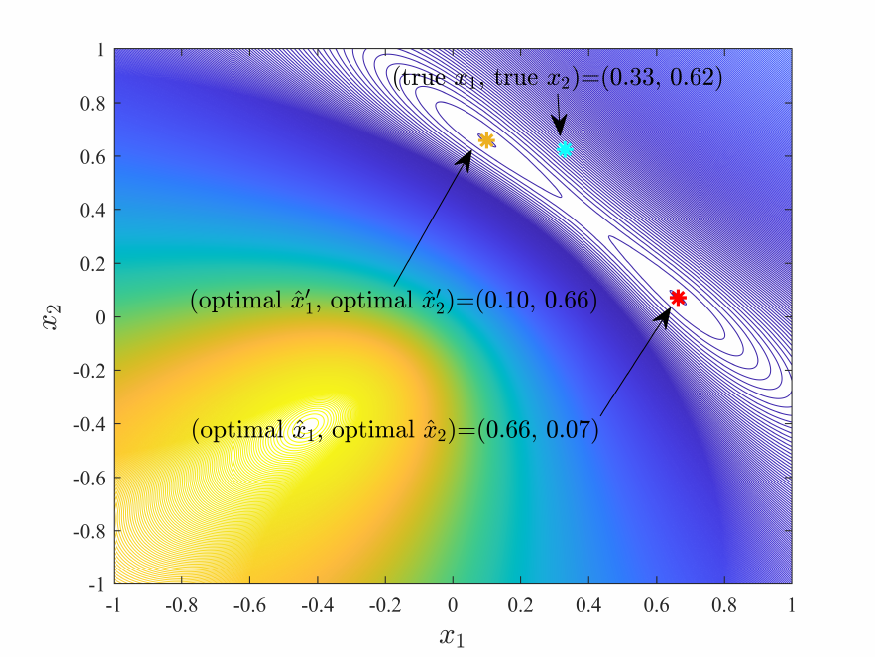}
	\caption{Example 1: 
    The contour plot of $f({\mathbf y};\mathbf x)$ with respect to the variables $x_1$ and $x_2$. The red and yellow stars denote the optimal solutions $\widehat{\mathbf x}_{\rm{ML}1} = [0.66, 0.07]^{\rm T}$ and $\widehat{\mathbf x}_{\rm{ML}2} = [0.10, 0.66]^{\rm T}$, respectively, and the blue star denotes the ground truth $\mathbf{x} = [0.33, 0.62]^{\rm T}$.}\label{Example1}
\end{figure}
In order to validate the correctness of the calculated ML estimates, the contour plot of the negative loglikelihood function $f({\mathbf y};\mathbf x)$ in (\ref{fxydef}) with respect to the variables $x_1$ and $x_2$ is created, as shown in Fig. \ref{Example1}. 

The red and yellow stars denote the two optimal solutions $\widehat{\mathbf x}_{\rm{ML}1} = [0.66, 0.07]^{\rm T}$ and $\widehat{\mathbf x}_{\rm{ML}2} = [0.10, 0.66]^{\rm T}$, while the blue star indicates the ground truth. As shown in Fig. \ref{Example1}, the ML objective achieves the same global minimum at two distinct points, which verifies the theoretical analysis in Subsubsection \ref{subsubcase1} and confirms the existence of multiple optimal solutions under this setting.

\subsubsection{Example 2 where $R \leq N - 1$ and $S<0$}
Example 2 shows the case where $R \leq N - 1$ and $S<0$, as discussed in Subsubsection \ref{subsubcase3} and this example considers the overdetermined case where $M>N$.
The parameters are set as follows: $M = 4$, $N = 2$, $\mathbf{x} = [0.62,0.23]^{\rm T}$, $\mathbf{y} = [-0.13,0.55,0.01,0.40]^{\rm T}$, and
\begin{align}
\mathbf{H} = 
\begin{bmatrix}
0.14 & -1.16\\
-0.23 & 1.86 \\
0.04 & -0.33 \\
-0.40 & 3.23
\end{bmatrix},
\end{align}
with $R =\operatorname{rank}(\mathbf{H}) = 1= N - 1<N=2$, $\sigma_e^2 = 0.10$, and $\sigma_{{\epsilon}}^2 = 0.03$. According to Eq. (\ref{sumgeq0}), $S$ is calculated to be $S = -6.00$. Because $R\leq N-1$ and $S<0$, Eq. (\ref{xstarwi0}) can be used to find the optimal solution  $\widetilde{\mathbf{x}}^\star$, and $\nu^\star$ can be efficiently calculated using the bisection method within the interval $(0, M/2]$ according to Proposition \ref{prop2}. Substituting  $\nu^\star=0.75$ into (\ref{xstarwi0}), the optimal solutions $\widetilde{\mathbf{x}}^\star$ is $[0.16,0.00]^{\rm T}$.

According to $\widehat{\mathbf x}_{\rm ML}={\mathbf V}{\widetilde{\mathbf x}^\star}$, the ML estimate is $\widehat{\mathbf x}_{\rm{ML}} = [-0.02,0.16]^{\rm T}$. 
\begin{figure}
	\centering
	\includegraphics[width = 4in]{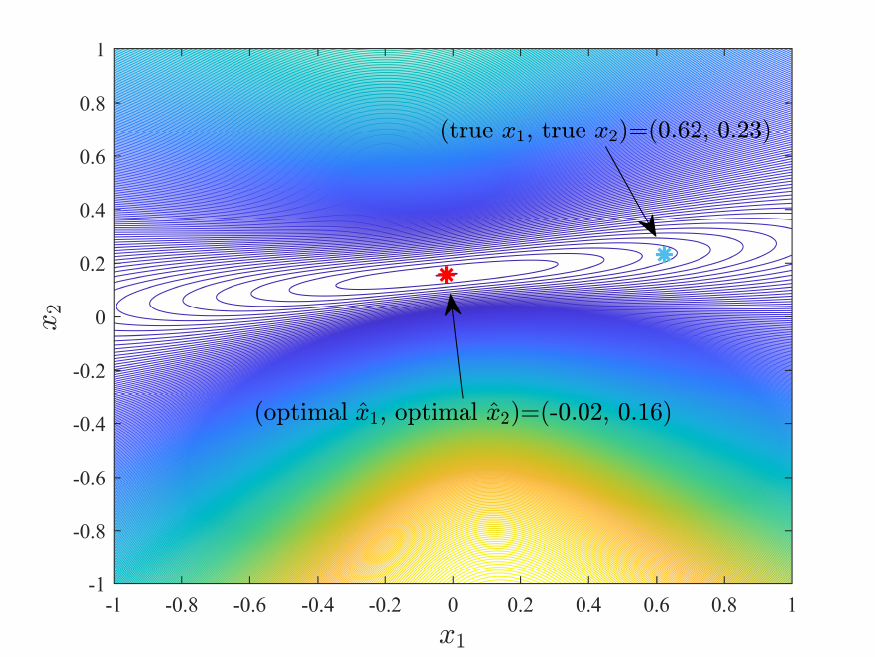}
	\caption{Example 2: 
    The contour plot of $f({\mathbf y};\mathbf x)$ with respect to the variables $x_1$ and $x_2$. The red star denotes the optimal solution $\widehat{\mathbf x}_{\rm{ML}} = [-0.02,0.16]^{\rm T}$, and the blue star denotes the ground truth $\mathbf{x} = [0.62,0.23]^{\rm T}$.}\label{Example2}
\end{figure}
To validate the correctness of the calculated ML estimate, Fig. \ref{Example2} presents the contour plot of the negative loglikelihood function $f({\mathbf y};\mathbf x)$ in (\ref{fxydef}) with respect to the variables $x_1$ and $x_2$. 

The red star denotes the optimal solution $\widehat{\mathbf x}_{\rm{ML}} = [-0.02,0.16]^{\rm T}$, and the blue star indicates the ground truth. As shown in Fig. \ref{Example2}, the ML objective achieves the global minimum at the red point, which verifies the theoretical analysis in Subsubsection \ref{subsubcase3} when $M>N$.

\subsubsection{Example 3 where $R =M\leq N - 1$ and $S=-\infty$}
Example 3 shows the case where $R \leq N - 1$ and $S=-\infty$, as discussed in Subsubsection \ref{subsubcase3} and this example considers the underdetermined case where $M<N$. The parameters are set as follows: $M = 1$, $N = 2$, $y = 2.86$, $\mathbf{x} = [0.77,0.60]^{\rm T}$, and
$\mathbf{H} = [1.93,2.61]$
with $R =\operatorname{rank}(\mathbf{H})=1= N - 1<N = 2$, $\sigma_e^2 = 0.10$, and $\sigma_{{\epsilon}}^2 = 0.03$. According to Eq. (\ref{sumgeq0}), $S$ is calculated to be $S = -\infty$. Because $R\leq N-1$ and $S<0$, Eq. (\ref{xstarwi0}) can be used to find the optimal solution $\widetilde{\mathbf{x}}^\star$, and $\nu^\star$ can be efficiently calculated using the bisection method within the interval $(0, M/2]$ according to Proposition \ref{prop2}. Substituting  $\nu^\star=0.50$ into (\ref{xstarwi0}), the optimal solution $\widetilde{\mathbf{x}}^\star$ is $[0.87,0.00]^{\rm T}$.

According to $\widehat{\mathbf x}_{\rm ML}={\mathbf V}{\widetilde{\mathbf x}^\star}$, the ML estimate is $\widehat{\mathbf x}_{\rm{ML}} = [0.52, 0.70]^{\rm T}$. 
To validate the correctness of the calculated ML estimate, the contour plot of the negative loglikelihood function $f({\mathbf y};\mathbf x)$ (\ref{fxydef}) with respect to the variables $x_1$ and $x_2$ is created, as shown in Fig. \ref{Example3}. 
\begin{figure}
	\centering
	\includegraphics[width = 4in]{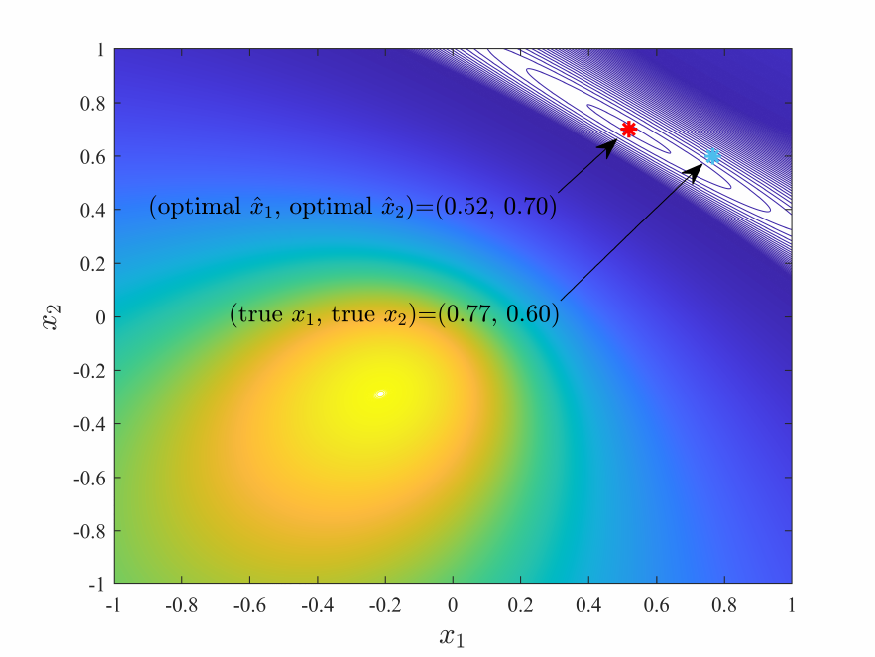}
	\caption{Example 3: The contour plot of $f({\mathbf y};\mathbf x)$ with respect to the variables $x_1$ and $x_2$. The red star denotes the optimal solution $\widehat{\mathbf x}_{\rm{ML}} = [0.52, 0.70]^{\rm T}$, and the blue star denotes the ground truth $\mathbf{x} = [0.77,0.60]^{\rm T}$.}\label{Example3}
\end{figure}

The red star marks the optimal solution $\widehat{\mathbf x}_{\rm{ML}} = [0.52, 0.70]^{\rm T}$, while the blue star indicates the ground truth. This example illustrates the estimation performance in the underdetermined case, where the number of measurements $M$ is strictly less than the number of unknown variables $N$. In this scenario, the ML estimate remains unique and shows that the randomness in the measurement matrix can play a beneficial role in the underdetermined case. As shown in Fig. \ref{Example3}, the solution obtained according to Subsubsection \ref{subsubcase3} successfully identifies this unique optimum.

\subsubsection{Example 4 where $R =N$ and $S<0$}
Example 4 shows the case where $R =N$ and $S<0$, as discussed in Subsubsection \ref{subsubcase4}. The parameters are set as follows: $M = 4$, $N = 2$, $\mathbf{x} = [0.61,0.42]^{\rm T}$, $\mathbf y = [-0.85,0.45,-1.27,-0.82]^{\rm T}$, and
\begin{align}
\mathbf{H} = 
\begin{bmatrix}
-1.92 & -0.05 \\
0.74 & -0.04 \\
-2.36 & -0.66 \\
-0.61 & -0.30 \\
\end{bmatrix},
\end{align}
with $R = \operatorname{rank}(\mathbf{H})=2=N$, $\sigma_e^2 = 0.10$, and $\sigma_{{\epsilon}}^2 = 0.03$.

According to Eq.(\ref{sumgeq0}), $S$ is calculated to be $S = -10.07<0$. Because $R= N$ and $S<0$, Eq. (\ref{widetildexML}) can be used to find the optimal solution $\widetilde{\mathbf{x}}^\star$, and $\nu^\star$ can be efficiently calculated using the bisection method within the interval $(0, M/2]$ according to Proposition \ref{prop2}. Substituting  $\nu^\star=0.64$ into (\ref{widetildexML}), the optimal solution $\widetilde{\mathbf{x}}^\star$ is $[0.53,0.35]^{\rm T}$.

According to (\ref{xML}), the ML estimate is $\widehat{\mathbf x}_{\rm{ML}} = [0.45, 0.44]^{\rm T}$. 
To assess the accuracy of the calculated ML estimates, Fig. \ref{Example4} displays the contour plot of the negative loglikelihood function $f({\mathbf y};\mathbf x)$ in (\ref{fxydef}), plotted over the variables $x_1$ and $x_2$.

\begin{figure}
	\centering
	\includegraphics[width = 4in]{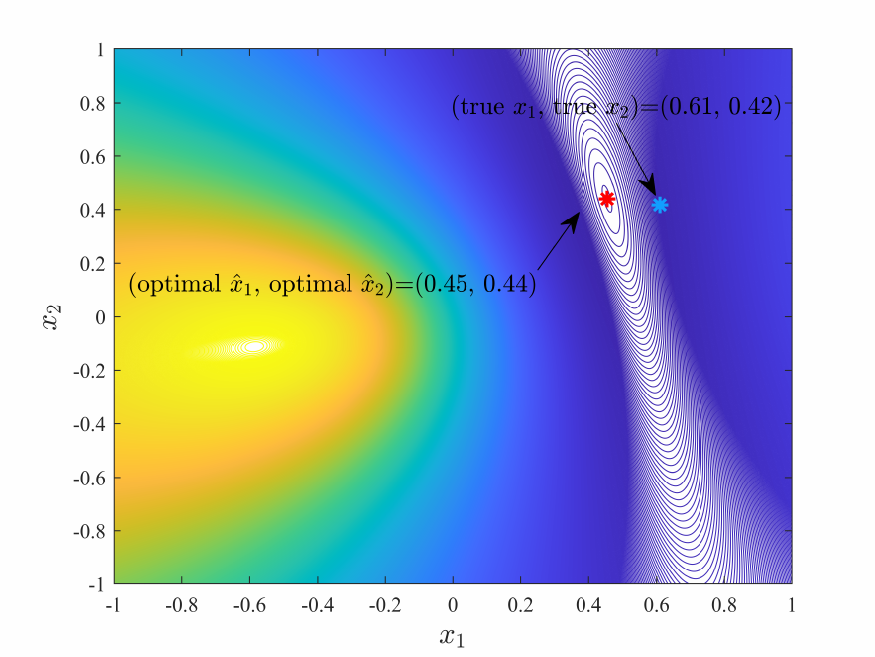}
	\caption{Example 4: The contour plot of $f({\mathbf y};\mathbf x)$ with respect to the variables $x_1$ and $x_2$. The red star denotes the optimal solution $\widehat{\mathbf x}_{\rm{ML}} = [0.45, 0.44]^{\rm T}$, and the blue star denotes the ground truth $\mathbf{x} = [0.61,0.42]^{\rm T}$.}\label{Example4}
\end{figure}
The optimal solution $\widehat{\mathbf x}_{\rm{ML}} = [0.45, 0.44]^{\rm T}$ is marked with a red star, while the ground truth is indicated by a blue star. As illustrated in Fig. \ref{Example4}, the method described in Subsubsection \ref{subsubcase4} successfully identifies the global optimum of the ML problem in this experiment, further validating the correctness of the theoretical result.

\subsubsection{Example 5 where $R =N$ and $S>0$}
Example 5 shows the case where $R =N$ and $S>0$, as discussed in Subsubsection \ref{subsubcase4}. The parameters are set as follows: $M = 4$, $N = 2$, $\mathbf{x} = [0.51,0.69]^{\rm T}$, $\mathbf y = [0.06,-0.20,0.20,-0.48]^{\rm T}$, and
\begin{align}
\mathbf{H} = 
\begin{bmatrix}
-0.40 & 0.71\\
0.64 & -0.64\\
2.65 & -1.85\\
-1.39 & 1.43\\
\end{bmatrix},
\end{align}
with $R = \operatorname{rank}(\mathbf{H})=2=N$, $\sigma_e^2 = 0.10$, and $\sigma_{{\epsilon}}^2 = 0.03$.
According to Eq. (\ref{sumgeq0}), $S$ is calculated to be $S = 2.87>0$. Because $R= N$ and $S>0$, Eq. (\ref{widetildexML}) can be used to find the optimal solution $\widetilde{\mathbf{x}}^\star$, and $\nu^\star$ can be efficiently calculated using the bisection method within the interval $(-\frac{\lambda_R^2}{2\sigma_e^2}, 0)$ according to Proposition \ref{prop2}. Substituting  $\nu^\star= -0.48$ into (\ref{widetildexML}), the optimal solution $\widetilde{\mathbf{x}}^\star$ is $[0.09,-0.27]^{\rm T}$.

According to Eq. (\ref{xML}), the ML estimate is $\widehat{\mathbf x}_{\rm{ML}} = [-0.10,-0.26]^{\rm T}$. 
To validate the correctness of the calculated ML estimates, the contour plot of the negative loglikelihood function $f({\mathbf y};\mathbf x)$ (\ref{fxydef}) with respect to the variables $x_1$ and $x_2$ is created, as shown in Fig. \ref{Example5}.

\begin{figure}
	\centering
	\includegraphics[width = 4in]{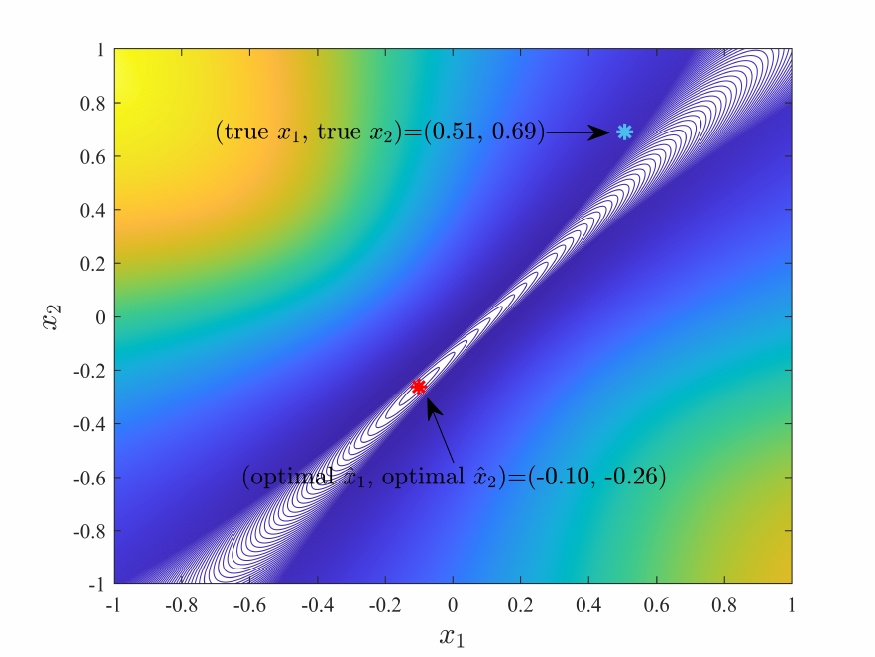}
	\caption{Example 5: The contour plot of $f({\mathbf y};\mathbf x)$ with respect to the variables $x_1$ and $x_2$. The red star denotes the optimal solution $\widehat{\mathbf x}_{\rm{ML}} = [-0.10,-0.26]^{\rm T}$, and the blue star denotes the ground truth $\mathbf{x} = [0.51,0.69]^{\rm T}$.}\label{Example5}
\end{figure}

The red star denotes the optimal solution $\widehat{\mathbf x}_{\rm{ML}} = [-0.10,-0.26]^{\rm T}$, and the blue star represents the true value. As illustrated in Fig. \ref{Example5}, the method described in Subsubsection \ref{subsubcase4} successfully identifies the global optimum of the ML problem in this experiment.
Note that although this example is overdetermined, the estimation error remains large when $\nu<0$.

\subsection{Statistical Evaluation in Underdetermined Settings}

\begin{figure}
	\centering
	\includegraphics[width = 4in]{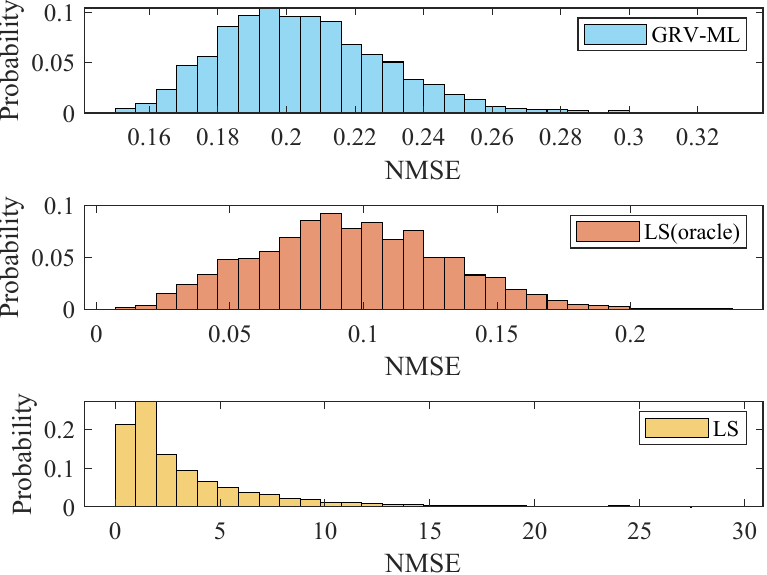}
	\caption{The histogram of the NMSE distributions in estimating $\mathbf{x}$ of RV model.}\label{NMSE}
\end{figure}
In this subsection, we statistically evaluate the estimation error of Algorithm \ref{algor}, the LS given by ${\mathbf H}^{\dagger}\mathbf{y}$, and the LS (oracle), which assumes that $\mathbf E$ is perfectly known and is obtained by $({\mathbf H}+{\mathbf E})^{\dagger}\mathbf{y}$. The evaluation is conducted in the underdetermined case where $M < N$ and $R = \operatorname{rank}(\mathbf{H}) = M$. The parameters are set as follows: $M = 95$, $N = 100$, $\sigma_e^2 = 0.10$, and $\sigma_\epsilon^2 = 0.03$. $\mathbf H$ and $\mathbf x$ are generated and fixed across 2000 Monte Carlo (MC) trials where $H_{ij}\sim {\mathcal N}(0,1)$ and $\mathbf x\sim {\mathcal N}(\mathbf 0,I_N)$. According to Eq. (\ref{sumgeq0}), $S$ is calculated to be $S = -\infty$. Because $R\leq N-1$ and $S< 0$, the settings corresponds to the case discussed in Subsubsection \ref{subsubcase3}. In this scenario, a unique solution can be obtained by following Steps \ref{stepcase2}-\ref{stepcase2end} of Algorithm \ref{algor}.
For an estimator $\widehat{\mathbf x}$, its normalized mean squared error (NMSE) is defined as $\text{NMSE}(\widehat{\mathbf x})=\frac{\|\widehat{\mathbf x}-\mathbf x\|_2^2}{\|\mathbf x\|_2^2}$.
The histogram of NMSE distributions is plotted in Fig. \ref{NMSE} to show each possible state and its probability distribution.

As observed, the NMSE of the proposed GRV-ML is concentrated around 0.19, while the LS (oracle) solution achieves a lower NMSE centered at approximately 0.09. In contrast, the LS estimator yields a significantly larger NMSE, centered around 1.95. 
Notably, in the underdetermined settings, the ML estimation exhibits slightly inferior performance compared to the LS (oracle) solution, highlighting the negative impact of the randomness in the measurement matrix $\mathbf G$ on estimation accuracy. However, it significantly outperforms the LS method, demonstrating the positive effect of such randomness in providing additional information about the norm of the variable, i.e. $\|\mathbf{x}\|_2^2$, through the covariance structure of the measurements.

\subsection{Performance Comparisons}
In this subsection, we compare the performance of the estimators and the CRB derived in \cite{EldarCRB, sparseCRB} is also evaluated. We also implement TLS solution to make performance comparison \cite{SPLeldar}.

For the first experiment, the parameters are set as follows: $M=64$, $N=4$, $\sigma_e^2=0.10$, $\mathbf H$ and $\mathbf x$ are generated and fixed across 500 MC trials where $H_{ij}\sim {\mathcal N}(0,1)$ and $\mathbf x\sim {\mathcal N}(\mathbf 0,\mathbf I_N)$. The mean squared error (MSE) versus $10\log(1/\sigma_{{\epsilon}}^2)$ is shown in Fig. \ref{CRBverify}. It can be seen that there always exists a performance gap between the CRB and the LS estimator, which makes sense as the LS is unbiased and is mismatched to the model, and CRB acts as a lower bound for the LS. In contrast, the MLE is biased in Fig. \ref{biasverify}, and the MSE of the MLE is lower than the CRB.
\begin{figure*}
    \centering
    \subfigure[]{
    \label{CRBverify}
    \includegraphics[width = 2.4in]{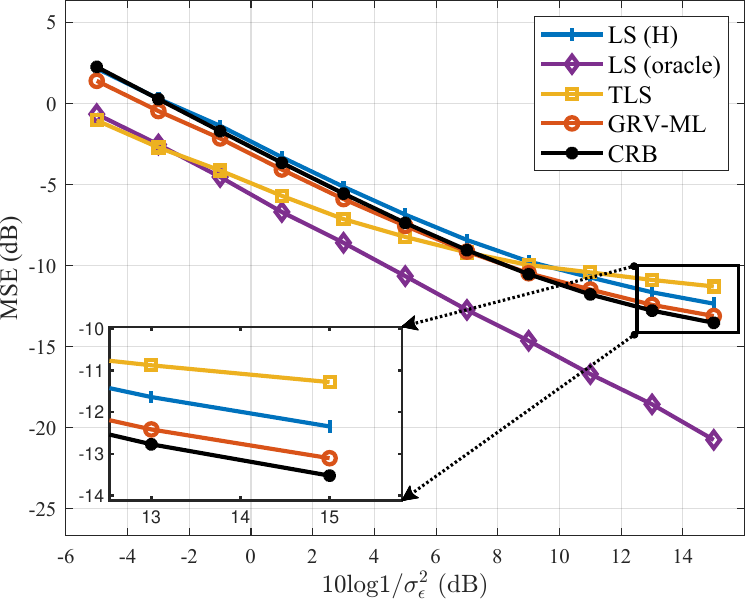}}
    \subfigure[]{	
    \label{biasverify}
    \includegraphics[width = 2.5in]{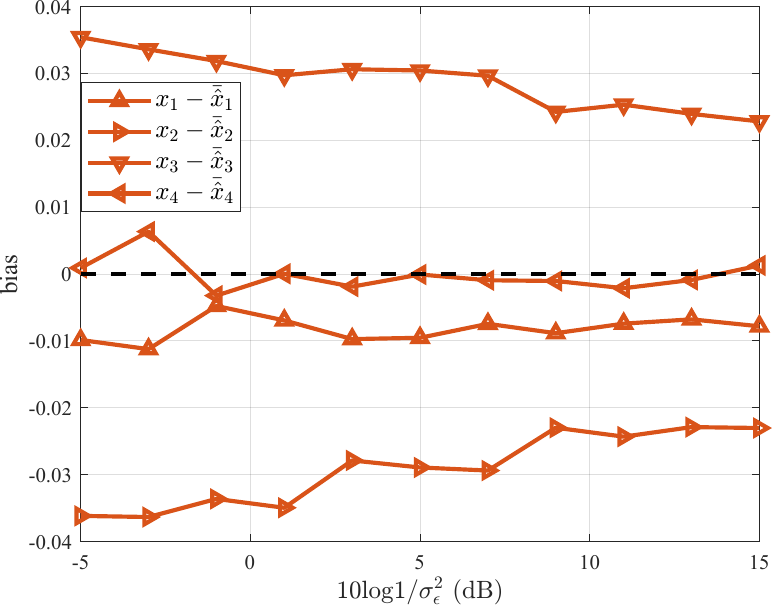}}
	\caption{(a): MSEs in estimating $\mathbf{x}$ of RV model.  
    (b): bias of MLE in estimating $\mathbf{x}$. $\bar{\widehat x}_1$, $\bar{\widehat x}_2$, $\bar{\widehat x}_3$, $\bar{\widehat x}_4$ denote the average estimate of ${\widehat x}_1$, ${\widehat x}_2$, ${\widehat x}_3$, ${\widehat x}_4$, respectively. The bias of $\bar{\widehat x}_i$ is $\bar{\widehat x}_i-x_i$.}\label{fig2}
\end{figure*}

To further investigate the performance of the ML estimator, we consider a setting where the ratio $\kappa = \frac{\sigma^2_e}{\sigma^2_\epsilon}$ is held constant, and both $\sigma_e^2$ and $\sigma_\epsilon^2$ are scaled down proportionally. 
Three representative values of $\kappa$ are considered: $\kappa = 1$, $\kappa = 0.1$, and $\kappa = 0.01$. The results are shown in Fig.  \ref{kappaverify}. As observed, when $\kappa$ becomes smaller, the influence of the effective noise induced by $\sigma_e$ diminishes, and the ML estimator increasingly resembles the LS (oracle) solution. In particular, for sufficiently small $\kappa$, the ML estimator becomes asymptotically unbiased and converges to the LS (oracle) estimator.
\begin{figure*}
    \centering
    \subfigure[]{
    \label{kappa1}
    \includegraphics[width = 1.7in]{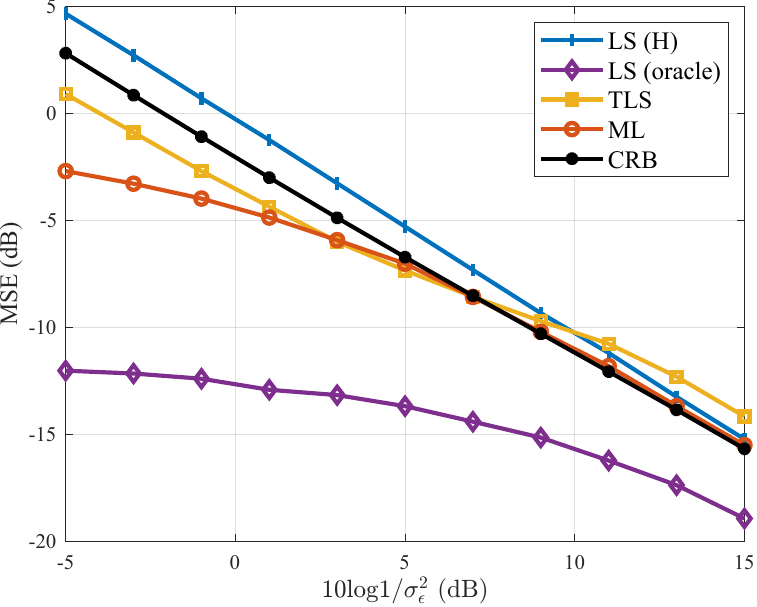}}
    \subfigure[]{
    \label{kappa01}
    \includegraphics[width = 1.7in]{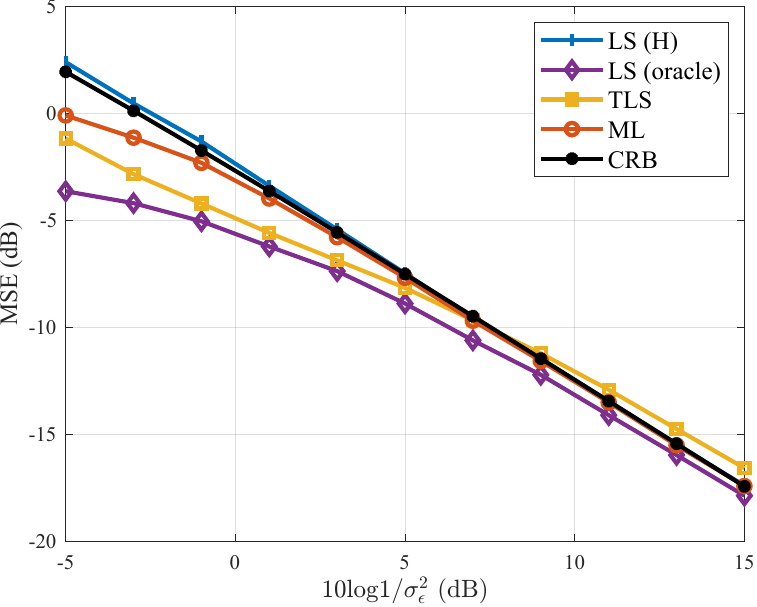}}
    \subfigure[]{	
    \label{kappa001}
    \includegraphics[width = 1.7in]{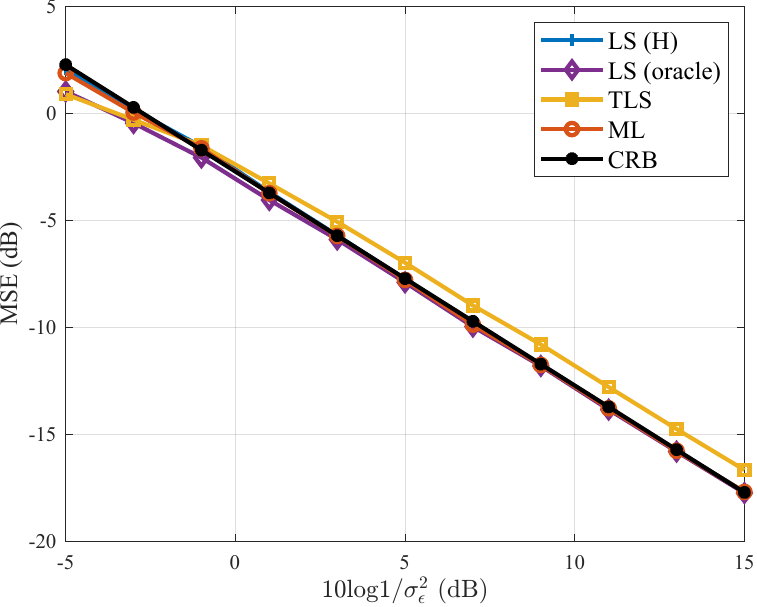}}
	\caption{MSEs in estimating $\mathbf{x}$ of random Gaussian model with different $\kappa=\frac{\sigma_e^2}{\sigma_\epsilon^2}$. (a): $\kappa=1$. (b): $\kappa=0.1$. 
    (c): $\kappa=0.01$. }\label{kappaverify}
\end{figure*}

Next, the asymptotic property of the MLE is examined. The parameters are set as follows: $N=4$, $\sigma_e^2=0.01$, $\sigma_{{\epsilon}}^2=0.20$, $\mathbf x$ is generated and fixed across the MC trials where $\mathbf x\sim {\mathcal N}(\mathbf 0,\mathbf I_N)$. The MSE versus the number of measurements $M$ is shown in Fig. \ref{mverify}. It can be seen that the MLE asymptotically approaches to the CRB, and performs better than the other methods without knowing the uncertainty.
\begin{figure}
	\centering
	\includegraphics[width = 3in]{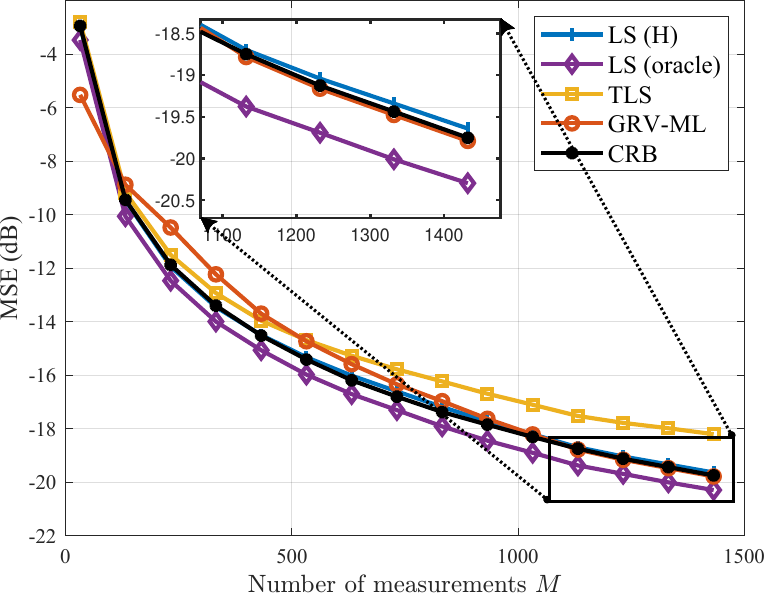}
	\caption{The MSE versus the number of measurements $M$.}\label{mverify}
\end{figure}

\section{Conclusion}
In this paper, we studied the linear regression with a Gaussian measurement model. By using the lifting technique, we transform the original optimization problem into a equivalent linear constrained optimization problem.
And it is proved that the equivalent MLE problem is convex and satisfies strong duality, strengthening previous quasi-convexity results.
By analyzing the KKT conditions, we derived a solution that accommodates all cases of rank-deficient and full-rank systems, as well as underdetermined and overdetermined configurations.
Compared to the previous solutions to this problem, this solution is more numerically efficient and general. Finally,  numerical results also demonstrate that the randomness in the measurement matrix has a negative effect in the estimation performance but also a positive one since it provides more information about the norm of the unknown vector through the covariance of the measurements under certain parameter settings.

\section{Acknowledgement}
This work was supported in part by the National Natural Science Foundation of China under Grant 62371420, in part by the National Natural Science Fund for Distinguished Young Scholars under Grant 62225114, in part by Zhejiang Provincial Natural Science Foundation of China under Grant LY22F010009, and in part by the National Natural Science Foundation of China under Grant 61901415.

\end{document}